\title{Multi-Antenna Wireless Powered Communication with Energy Beamforming \footnote {L. Liu is with the Department of Electrical
and Computer Engineering, National University of Singapore
(e-mail:liu\_liang@nus.edu.sg).}\footnote{R. Zhang is with the
Department of Electrical and Computer Engineering, National
University of Singapore (e-mail:elezhang@nus.edu.sg). He is also
with the Institute for Infocomm Research, A*STAR, Singapore.}
\footnote{K. C. Chua is with the Department of Electrical and
Computer Engineering, National University of Singapore
(e-mail:eleckc@nus.edu.sg).}}
\author{Liang Liu, Rui Zhang, and Kee-Chaing Chua}
\begin{document}
\maketitle \thispagestyle{empty} \vspace{-0.3in}

\begin{abstract}
The newly emerging wireless powered communication networks (WPCNs) have recently drawn significant attention, where radio signals are used to power wireless terminals for information transmission. In this paper, we study a WPCN where one multi-antenna access point (AP) coordinates energy transfer and information transfer to/from a set of single-antenna users. A harvest-then-transmit protocol is assumed where the AP first broadcasts wireless power to all users via energy beamforming in the downlink (DL), and then the users send their independent information to the AP simultaneously in the uplink (UL) using their harvested energy. To optimize the users' throughput and yet guarantee their rate fairness, we maximize the minimum throughput among all users by a joint design of the DL-UL time allocation, the DL energy beamforming, and the UL transmit power allocation plus receive beamforming. We solve this non-convex problem optimally by two steps. First, we fix the DL-UL time allocation and obtain the optimal DL energy beamforming, UL power allocation and receive beamforming to maximize the minimum signal-to-interference-plus-noise ratio (SINR) of all users. This problem is shown to be in general non-convex; however, we convert it equivalently to a spectral radius minimization problem, which can be solved efficiently by applying the alternating optimization based on the non-negative matrix theory. Then, the optimal time allocation is found by a one-dimension search to maximize the minimum rate of all users. Furthermore, two suboptimal designs of lower complexity are proposed, and their throughput performance is compared against that of the optimal solution.
\end{abstract}

\begin{keywords}

Wireless power transfer, energy beamforming, wireless powered communication, non-negative matrix theory.

\end{keywords}

\setlength{\baselineskip}{1.3\baselineskip}
\newtheorem{definition}{\underline{Definition}}[section]
\newtheorem{fact}{Fact}
\newtheorem{assumption}{Assumption}
\newtheorem{theorem}{\underline{Theorem}}[section]
\newtheorem{lemma}{\underline{Lemma}}[section]
\newtheorem{corollary}{\underline{Corollary}}[section]
\newtheorem{proposition}{\underline{Proposition}}[section]
\newtheorem{example}{\underline{Example}}[section]
\newtheorem{remark}{\underline{Remark}}[section]
\newtheorem{algorithm}{\underline{Algorithm}}[section]
\newcommand{\mv}[1]{\mbox{\boldmath{$ #1 $}}}

\section{Introduction}\label{eqn:Introduction}
Recently, energy harvesting has become an appealing solution to prolong the lifetime of energy constrained wireless networks such as device centric or sensor based wireless networks. In particular, radio frequency (RF) signals radiated by ambient transmitters is a viable new source for wireless energy harvesting. As a result, the wireless powered communication network (WPCN) has drawn an upsurge of interests, where RF signals are used to wirelessly power user terminals for communication. A typical WPCN model is shown in Fig. \ref{fig1} \cite{Rui11}, where an access point (AP) with constant power supply coordinates the downlink (DL) wireless information and energy transfer to a set of distributed user terminals that do not have embedded energy sources, as well as the wireless powered information transmission from the users in the uplink (UL).

\begin{figure}
\centering
 \epsfxsize=0.8\linewidth
    \includegraphics[width=10cm]{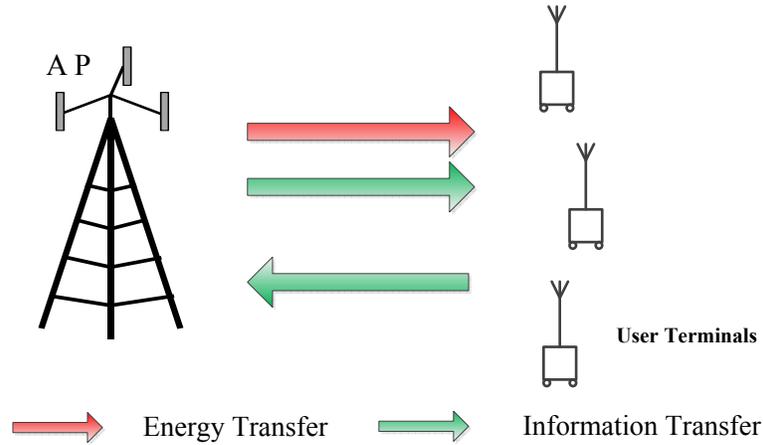}
\caption{A general wireless powered communication network (WPCN) with downlink (DL) information and energy transfer and uplink (UL) information transfer.}
\label{fig1}
\end{figure}

It is worth noting that the DL simultaneous wireless information and power transfer (SWIPT) in WPCNs has been recently studied in the literature (see e.g. \cite{Rui11}-\cite{RuiTWC}), where the achievable information versus energy transmission trade-offs were characterized under different channel setups. However, the above works have not addressed the joint design of DL energy transfer and UL information transmission in WPCNs, which is another interesting problem to investigate even by ignoring the DL information transmission for the purpose of exposition. In \cite{RuiGlobecom}, a WPCN with single-antenna AP and users has been studied for joint DL energy transfer and UL information transmission. A ``harvest-then-transmit'' protocol was proposed in \cite{RuiGlobecom} where the users first harvest energy from the signals broadcast by the AP in the DL, and then use their harvested energy to send independent information to the AP in the UL based on time-division-multiple-access (TDMA). The orthogonal time allocations for the DL energy transfer and UL information transmissions of all users are jointly optimized to maximize the network throughput. Furthermore, an interesting ``doubly near-far'' phenomenon was revealed in \cite{RuiGlobecom}, where a far user from the AP, which receives less power than a near user in the DL energy transfer, also suffers from more signal power attenuation in the UL information transmission due to pass loss.

\begin{figure}
\centering
 \epsfxsize=0.8\linewidth
    \includegraphics[width=12cm]{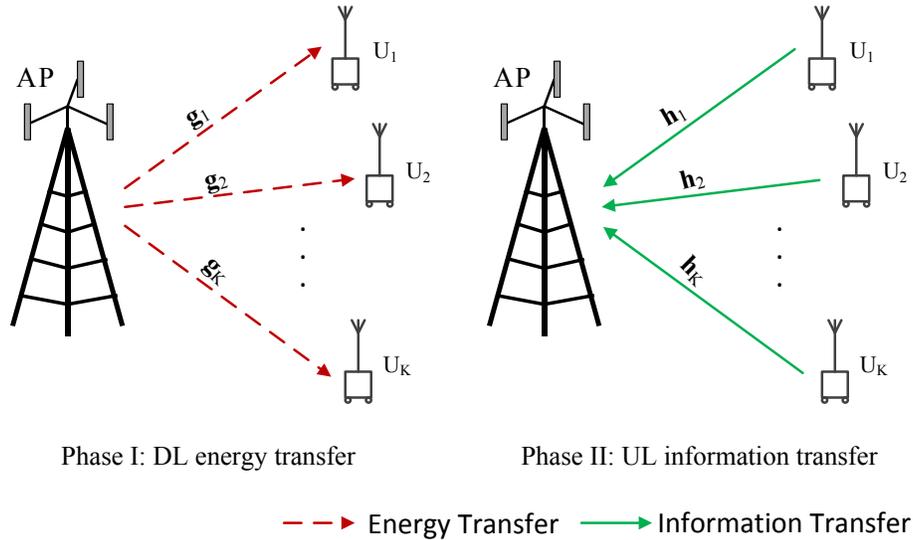}
\caption{A multi-antenna WPCN with DL energy transfer and UL
information transfer.}
\label{fig2}
\end{figure}

In this paper, we extend the study of \cite{RuiGlobecom} to WPCNs with the multi-antenna AP, as shown in Fig. \ref{fig2}. When the AP is equipped with multiple antennas, the amount of energy transferred to different users in the DL can be controlled by designing different energy beamforming weights at the AP, while in the UL all users can transmit information to the AP simultaneously via space-division-multiple-access (SDMA), which thus has higher spectrum efficiency than orthogonal user transmissions in TDMA as considered in \cite{RuiGlobecom}. To overcome the doubly near-far problem, similar to \cite{RuiGlobecom}, we maximize the minimum UL throughput among all users by a joint optimization of the DL-UL time allocation, the DL energy beamforming, and the UL transmit power allocation plus receive beamforming. First, we assume that the optimal linear minimum-mean-square-error (MMSE) based receiver is employed at the AP for UL information transmission, which results in a non-convex problem. We solve this problem optimally by two steps: First, we fix the DL-UL time allocation and obtain the corresponding optimal DL energy beamforming, UL power allocation and receive beamforming solution; then, the problem is solved by a one-dimension search over the optimal time allocation. Particularly, for the joint DL energy beamforming and UL power allocation plus receive beamforming optimization, it is shown that this problem is in general non-convex. However, we establish its equivalence to a spectral radius minimization problem, which is then solved globally optimally by applying the alternating optimization technique \cite{Schubert04} based on the non-negative matrix theory \cite{Horn85}, \cite{Seneta81}. Notice that the non-negative matrix theory has been applied in the literature to the UL multiuser information transmission with transmit power control and receive beamforming (see e.g. \cite{Schubert04}, \cite{Zhanglan08}, \textcolor{red}{\cite{Tan13}} and the references therein). Therefore, our proposed algorithm in this case can be viewed as an extension of the above works to the case with jointly optimizing the DL energy beamforming for wireless power transfer. It is also worth pointing out that in conventional multi-antenna wireless networks with both the UL and DL information transmissions, a useful tool that has been successfully applied to solve many non-convex design problems is the so-called UL-DL duality \cite{Schubert04}, \textcolor{red}{\cite{Tan13}}-\cite{Zhanglan13}. Different from this conventional setup, in this paper we explore another interesting new relationship between the DL and UL transmissions in a WPCN with coupled DL energy transfer and UL information transmission optimization. Finally, to reduce the complexity of the optimal solution, we propose two suboptimal designs employing the zero-forcing (ZF) based receive beamforming in the UL information transmission.

The rest of this paper is organized as follows. Section \ref{sec:System Model} presents the multi-antenna WPCN model with the harvest-then-transmit protocol. Section \ref{sec:Problem Formulation} formulates the minimum throughput maximization problem. Section \ref{sec:Optimal Solution} presents the optimal solution for this problem based on non-negative matrix theory. Section \ref{A ZF-Based Low-Complexity Suboptimal Solution} presents two suboptimal designs with lower complexity. Section \ref{sec:Numerical Results} provides numerical results to compare the performances of proposed solutions. Finally, Section \ref{sec:Conclusion} concludes the paper.

{\it Notation}: Scalars are denoted by lower-case letters, vectors
by bold-face lower-case letters, and matrices by
bold-face upper-case letters. $\mv{I}$ and $\mv{0}$  denote an
identity matrix and an all-zero matrix, respectively, with
appropriate dimensions. For a square matrix $\mv{S}$, ${\rm Tr}(\mv{S})$
denotes the trace of $\mv{S}$; $\mv{S}\succeq\mv{0}$ ($\mv{S}\preceq \mv{0}$) means that $\mv{S}$ is positive (negative) semi-definite. For a matrix
$\mv{M}$ of arbitrary size, $\mv{M}^{H}$ and ${\rm rank}(\mv{M})$ denote the
conjugate transpose and rank of $\mv{M}$, respectively. $E[\cdot]$ denotes the statistical expectation. The
distribution of a circularly symmetric complex Gaussian (CSCG) random vector with mean $\mv{x}$ and
covariance matrix $\mv{\Sigma}$ is denoted by
$\mathcal{CN}(\mv{x},\mv{\Sigma})$; and $\sim$ stands for
``distributed as''. $\mathbb{C}^{x \times y}$ denotes the space of
$x\times y$ complex matrices. $\|\mv{x}\|$ denotes the Euclidean norm of a complex vector
$\mv{x}$. For two real vectors
$\mv{x}$ and $\mv{y}$, $\mv{x}\geq \mv{y}$ means that $\mv{x}$ is
greater than or equal to $\mv{y}$ in a component-wise manner.

\section{System Model}\label{sec:System Model}

Consider a WPCN consisting of one AP and $K$ users, denoted by $U_k$, $1\leq k \leq K$, as shown in Fig. \ref{fig2}. It is assumed that the AP is equipped with $M>1$ antennas, while each $U_k$ is equipped with one antenna. The conjugated complex DL channel vector from the AP to $U_k$ and the reversed UL channel vector are denoted by $\mv{g}_k\in \mathbb{C}^{M\times 1}$ and $\mv{h}_k\in \mathbb{C}^{M\times 1}$, respectively. We assume that all channels follow independent quasi-static flat fading, where $\mv{g}_k$'s and $\mv{h}_k$'s remain
constant during one block transmission time, denoted by $T$, but in general can vary from block to block.\footnote{In practice, for the UL information transmission, the channels $\mv{h}_k$'s can be estimated by the AP based on the pilot signals sent by individual $U_k$'s, while for the DL power transfer, the channels $\mv{g}_k$'s can be obtained by the AP via, e.g., sending the pilot signal to all $U_k$'s and collecting channel estimation feedback from individual $U_k$'s. To focus on the performance upper bound, in this paper we assume that such channel knowledge is perfectly known at the AP for both DL and UL transmissions in each block.}

\begin{figure}
\centering
 \epsfxsize=0.8\linewidth
    \includegraphics[width=8cm]{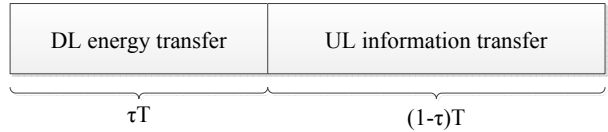}
\caption{The harvest-then-transmit protocol \cite{RuiGlobecom}.}
\label{fig3}
\end{figure}

In this paper, we assume that all $U_k$'s have no conventional energy supplies (e.g. fixed batteries) available and thus need to replenish energy from the signals sent by the AP in the DL. However, we assume that an energy storage device (ESD) in the form of rechargeable battery or super-capacitor is still equipped at each user terminal to store the energy harvested from received RF signals for future use. In particular, we adopt the ``harvest-then-transmit'' protocol proposed in \cite{RuiGlobecom}, as shown in Fig. \ref{fig3}, which is described as follows. In each block, during the first $\tau T$ ($0< \tau < 1$) amount of time, the AP broadcasts energy signals in the DL to transfer energy to all $U_k$'s simultaneously, while in the remaining $(1-\tau)T$ amount of time of the block, all $U_k$'s transmit their independent information to the AP simultaneously in the UL by SDMA using their harvested energy from the DL. For convenience, we normalize $T=1$ in the rest of this paper without loss of generality.

More specifically, during the DL phase, the AP transmits with $l$ energy beams to broadcast energy to all $U_k$'s, as shown in Fig. \ref{fig2}(a), where $l$ can be an arbitrary integer that is no larger than $M$. The baseband transmit signal $\mv{x}_0$ is thus expressed as
\begin{align}\label{eqn:energy signal}
\mv{x}_0=\sum\limits_{i=1}^l\mv{v}_is_i^{{\rm dl}},
\end{align}where $\mv{v}_i\in \mathbb{C}^{M\times 1}$ denotes the $i$th energy beam, and $s_i^{{\rm dl}}$ is its energy-carrying signal. It is assumed that $s_i^{{\rm dl}}$'s are independent and identically distributed (i.i.d.) random variables (RVs) with zero mean and unit variance. Then the transmit power of the AP in the DL can be expressed as $E[\|\mv{x}_0\|^2]=\sum_{i=1}^l\|\mv{v}_i\|^2$. Suppose that the AP has a transmit sum-power constraint $P_{{\rm sum}}$; thus, we have $\sum_{i=1}^l\|\mv{v}_i\|^2 \leq P_{{\rm sum}}$. The received signal in the DL at $U_k$ is then expressed as (by ignoring the receiver noise that is in practice negligible for energy receivers)
\begin{align}\label{eqn:received signal}
y_k=\mv{g}_k^H\mv{x}_0=\mv{g}_k^H\sum\limits_{i=1}^l\mv{v}_is_i^{{\rm dl}}, ~~~ k=1,\cdots,K.
\end{align}Due to the broadcast nature of wireless channels, the energy carried by all $l$ energy beams, i.e., $\mv{v}_i$'s ($i=1,\cdots,l$), can be harvested at each $U_k$. As a result, the harvested energy of $U_k$ in the DL can be expressed as
\begin{align}\label{eqn:energy}
E_k=\epsilon \tau E[|y_k|^2]=\epsilon \tau \sum\limits_{i=1}^l|\mv{g}_k^H\mv{v}_i|^2, ~~~ k=1,\cdots,K,
\end{align}where $0< \epsilon \leq 1$ denotes the energy harvesting efficiency at the receiver. Define $\mv{V}=\{\mv{v}_1,\cdots,\mv{v}_l\}$. Then, the average transmit power available for $U_k$ in the subsequent UL phase of information transmission is given by
\begin{align}\label{eq:available power}
\bar{P}_k(\mv{V},\tau) =\frac{E_k-E_k^{{\rm c}}}{1-\tau} =\frac{\epsilon \tau \sum\limits_{i=1}^l|\mv{g}_k^H\mv{v}_i|^2-E_k^{{\rm c}}}{1-\tau}, ~~~ k=1,\cdots,K,
\end{align}where $E_k^{{\rm c}}\geq 0$ denotes the circuit energy consumption at $U_k$ which is assumed to be constant over blocks. For convenience, we assume $E_k^{{\rm c}}=0$, $\forall k$, in the sequel to focus on transmit power for UL information transmission. Notice that thanks to multiple antennas equipped at the AP, we can schedule the UL transmit power at each $U_k$ via a proper selection of the DL energy beams in $\mv{V}$, which is not possible in a single-input single-output (SISO) WPCN with single-antenna AP as considered in \cite{RuiGlobecom}.

Next, in the UL phase, each $U_k$ utilizes its harvested energy in the previous DL phase to transmit information to the AP, as shown in Fig. \ref{fig2}(b). The transmit signal of $U_k$ in the UL is then expressed as
\begin{align}
x_k=\sqrt{p_k}s_k^{{\rm ul}}, ~~~ k=1,\cdots,K,
\end{align}where $s_k^{{\rm ul}}$'s denote the information-carrying signals of $U_k$'s, which are assumed to be i.i.d. circularly symmetric complex Gaussian (CSCG) RVs with zero mean and unit variance, denoted by $s_k^{{\rm ul}}\sim \mathcal{CN}(0,1)$, $\forall k$, and $p_k$ denotes the transmit power of $U_k$. Note that $p_k\leq \bar{P}_k(\mv{V},\tau)$, $\forall k$. The received signal at the AP in the UL is then expressed as
\begin{align}
\mv{y}=\sum\limits_{k=1}^K\mv{h}_kx_k+\mv{n}=\sum\limits_{k=1}^K\mv{h}_k\sqrt{p_k}s_k^{{\rm ul}}+\mv{n},
\end{align}where $\mv{n}\in \mathbb{C}^{M\times 1}$ denotes the receiver additive white Gaussian noise (AWGN). It is assumed that $\mv{n}\sim \mathcal{CN}(\mv{0},\sigma^2\mv{I})$. In this paper, we assume that the AP employs linear receivers to decode $s_k^{{\rm ul}}$'s in the UL. Specifically, let $\mv{w}_k\in \mathbb{C}^{M\times 1}$ denote the receive beamforming vector for decoding $s_k^{{\rm ul}}$, $k=1,\cdots,K$. Define $\mv{p}=[p_1,\cdots,p_K]^T$ and $\mv{W}=\{\mv{w}_1,\cdots,\mv{w}_K\}$. Then, the signal-to-interference-plus-noise ratio (SINR) for decoding $U_k$'s signal is expressed as
\begin{align}\label{equ:SINR in
SIMO-IC}
\gamma_k(\mv{p},\mv{w}_k)=\frac{p_k\|\mv{w}_k^H\mv{
h}_k\|^2}{\mv{w}_k^H\left(\sum\limits_{j\neq k}p_j\mv{h}_j\mv{
h}_j^H+\sigma^2\mv{I}\right)\mv{w}_k}, ~~~ k=1,\cdots,K.
\end{align}Thus, the achievable rate (in bps/Hz) for the UL information transmission of $U_k$ can be expressed as
\begin{align}\label{eqn:throughput}
R_k =(1-\tau)\log_2(1+\gamma_k(\mv{p},\mv{w}_k)) =(1-\tau)\log_2\left(1+\frac{p_k\|\mv{w}_k^H\mv{
h}_k\|^2}{\mv{w}_k^H\left(\sum\limits_{j\neq k}p_j\mv{h}_j\mv{
h}_j^H+\sigma^2\mv{I}\right)\mv{w}_k}\right), ~~~ \forall k.
\end{align}Notice that there exists a non-trivial trade-off in determining the optimal DL-UL time allocation $\tau$ to maximize $R_k$ since to increase the transmit power $p_k$, more time should be allocated to DL energy transfer according to (\ref{eq:available power}), while this will reduce the UL information transmission time from (\ref{eqn:throughput}).

\section{Problem Formulation}\label{sec:Problem Formulation}

In this paper, we are interested in maximizing the minimum (max-min) throughput of all $U_k$'s in each block by jointly optimizing the time allocation $\tau$, the DL energy beams $\mv{V}$, the UL transmit power allocation $\mv{p}$ and receive beamforming vectors $\mv{W}$, i.e.,
\begin{align}\label{equ:minimum throughput maximization}\mathop{\mathtt{Maximize}}_{\tau,\mv{p},\mv{W},\mv{V}}
& ~~ \min\limits_{1\leq k \leq K}(1-\tau)\log_2\left(1+\gamma_k(\mv{p},\mv{w}_k)\right) \nonumber \\ \mathtt{Subject \ to} & ~~ 0< \tau < 1, \nonumber \\ & ~~ p_k\leq \bar{P}_k(\mv{V},\tau), ~ \forall k, \nonumber \\ & ~~ \sum\limits_{i=1}^l\|\mv{v}_i\|^2\leq P_{{\rm sum}}.\end{align}It is worth noting that the number of energy beams, i.e., $l$, is a design variable in problem (\ref{equ:minimum throughput maximization}). After the DL energy beamforming solution $\mv{V}$ is obtained, we can set the optimal value of $l$ as the number of columns in $\mv{V}$.

Problem (\ref{equ:minimum throughput maximization}) is non-convex due to the coupled design variables in the objective function as well as the UL transmit power constraints. Note that if we fix $\tau=\bar{\tau}$ and $\mv{V}=\bar{\mv{V}}$, then problem (\ref{equ:minimum throughput maximization}) reduces to the following UL SINR balancing problem with the users' individual power constraints.
\begin{align}\label{equ:sinr balancing}\mathop{\mathtt{Maximize}}_{\mv{p},\mv{W}}
& ~~ \min\limits_{1\leq k \leq K}\gamma_k(\mv{p},\mv{w}_k) \nonumber \\ \mathtt{Subject \ to} & ~~ p_k\leq \bar{P}_k(\bar{\mv{V}},\bar{\tau}), ~ \forall k. \end{align}The above problem has been solved in the literature. For example, in \cite{Zhanglan08} problem (\ref{equ:sinr balancing}) was decoupled into $K$ subproblems, each with one individual user power constraint and thus solvable by the non-negative matrix theory based algorithm proposed in \cite{Schubert04}. In the following two sections, we propose both optimal and suboptimal algorithms to solve problem (\ref{equ:minimum throughput maximization}), respectively.

\section{Optimal Solution}\label{sec:Optimal Solution}

In this section, we propose to solve problem (\ref{equ:minimum throughput maximization}) optimally via a two-step procedure as follows. First, by fixing $\tau=\bar{\tau}$, $0<\bar{\tau}<1$, problem (\ref{equ:minimum throughput maximization}) reduces to the following problem.\begin{align}\label{equ:SINR balancing problem for SIMO-IC
with individual power constraint}\mathop{\mathtt{Maximize}}_{\mv{p},\mv{W},\mv{V}}
 & ~~ \min\limits_{1\leq k \leq K}\gamma_k(\mv{p},\mv{w}_k) \nonumber \\ \mathtt{Subject \ to} & ~~ p_k\leq \bar{P}_k(\mv{V},\bar{\tau}), ~ \forall k, \nonumber \\ & ~~ \sum\limits_{i=1}^l\|\mv{v}_i\|^2\leq P_{{\rm sum}}.\end{align}Let $g(\bar{\tau})$ denote the optimal value of problem (\ref{equ:SINR balancing problem for SIMO-IC
with individual power constraint}) with any given $\bar{\tau}$. The optimal value of problem (\ref{equ:minimum throughput maximization}) can then be obtained as \begin{align}\label{eqn:optimal time allocation}
R^\ast=\max\limits_{0< \bar{\tau} < 1} (1-\bar{\tau})\log_2(1+g(\bar{\tau})).\end{align}To summarize, problem (\ref{equ:minimum throughput maximization}) can be solved in the following two steps: First, given any $\bar{\tau}$, we solve problem (\ref{equ:SINR balancing problem for SIMO-IC
with individual power constraint}) to find $g(\bar{\tau})$; then, we solve problem (\ref{eqn:optimal time allocation}) to find the optimal $\bar{\tau}^\ast$ by a simple one-dimension search over $0< \bar{\tau} < 1$. In the rest of this section, we thus focus on solving problem (\ref{equ:SINR balancing problem for SIMO-IC
with individual power constraint}) with given $\bar{\tau}$. It is worth noting that as will be shown later in the numerical results in Section \ref{sec:Numerical Results}, with the optimal solution to problem (\ref{equ:SINR balancing problem for SIMO-IC
with individual power constraint}) for certain $\bar{\tau}$, denoted by $(\mv{p}^\ast,\mv{W}^\ast,\mv{V}^\ast)$, the users' individual power constraints in (\ref{equ:SINR balancing problem for SIMO-IC
with individual power constraint}) are not necessarily all tight, i.e., there may exist some $k$'s such that $p_k^\ast<\bar{P}_k(\mv{V}^\ast,\bar{\tau})$. This indicates that power control is in general needed in the UL information transmission since the optimal strategy for each user is not to always transmit with its maximum available power using the harvested energy from the DL power transfer.

By introducing a common SINR requirement $\gamma$ for all $U_k$'s, problem (\ref{equ:SINR balancing problem for SIMO-IC
with individual power constraint}) can be reformulated as the following problem.
\begin{align}\label{equ:common throughput}\mathop{\mathtt{Maximize}}_{\mv{p},\mv{W},\mv{V},\gamma}
& ~~ \gamma \nonumber \\ \mathtt{Subject \ to} & ~~ \gamma_k(\mv{p},\mv{w}_k)\geq \gamma, ~ \forall k, \nonumber \\ & ~~ p_k\leq \bar{P}_k(\mv{V},\bar{\tau}), ~ \forall k, \nonumber \\  & ~~ \sum\limits_{i=1}^l\|\mv{v}_i\|^2\leq P_{{\rm sum}}.\end{align}Note that even if we fix $\mv{V}=\bar{\mv{V}}$ in problem (\ref{equ:common throughput}), which reduces to the well-known SINR balancing problem given in (\ref{equ:sinr balancing}), this problem in general is still non-convex over $\mv{p}$, $\mv{W}$ and $\gamma$, and as a result its optimal solution cannot be obtained by convex optimization techniques \cite{Boyd04}. However, the non-negative matrix theory \cite{Horn85}, \cite{Seneta81} has been used in e.g., \cite{Schubert04}, \cite{Zhanglan08}, and \cite{Tan13} to obtain the optimal solution to problem (\ref{equ:sinr balancing}). By extending the results in \cite{Schubert04}, \cite{Zhanglan08}, and \cite{Tan13}, in the following we present an efficient algorithm to solve problem (\ref{equ:common throughput}) with the joint DL energy beamforming optimization based on the non-negative matrix theory.

First, we transform the SINR balancing problem given in
(\ref{equ:common throughput}) into an equivalent spectral radius minimization problem, where the spectral radius of a matrix $\mv{B}$, denoted by $\rho(\mv{B})$, is defined as its maximum eigenvalue in absolute value \cite{Horn85}, \cite{Seneta81}. Define $\mv{D}(\mv{W})={\rm Diag}\left\{\frac{1}{\|\mv{w}_1^H\mv{
h}_1\|^2},\cdots,\frac{1}{\|\mv{w}_K^H\mv{
h}_K\|^2}\right\}$, $\mv \sigma(\mv{W})=[(1-\bar{\tau})\sigma^2\|\mv{
w}_1\|^2,\cdots,(1-\bar{\tau})\sigma^2\|\mv{w}_K\|^2]^T$, and the $K
\times K$ non-negative matrix $\mv \Psi(\mv{W})$ as
\begin{align*}[\Psi(\mv{W})]_{k,j}=\left\{\begin{array}{ll}\|\mv{
w}_k^H\mv{h}_j\|^2, & k\neq j\\ 0, &
k=j,\end{array}\right.\end{align*}where $[\Psi(\mv{W})]_{k,j}$ denotes the entry on the $k$th row and $j$th column of $\Psi(\mv{W})$. Furthermore, define
\begin{align}\label{eqn:matrix Ak}\mv{A}_k(\mv{
W},\mv{V})=\left(\begin{array}{cc}\mv{D}(\mv{W})\mv \Psi(\mv{W}) & \mv{
D}(\mv{W})\mv \sigma(\mv{W})\\ \frac{\mv{e}_k^T\mv{D}(\mv{W})\mv
\Psi(\mv{W})}{\bar{P}_k(\mv{V},\bar{\tau})} & \frac{\mv{e}_k^T\mv{D}(\mv{W})\mv
\sigma(\mv{W})}{\bar{P}_k(\mv{V},\bar{\tau})}\end{array}\right), ~ \forall k, \end{align}where $\mv{e}_k\in \mathbb{C}^{K\times 1}$ denotes a vector with its $k$th component being $1$, and all other components being $0$. Then we have the following theorem.

\begin{theorem}\label{theorem1}
Given any feasible $\mv{W}$ and $\mv{V}$, the optimal SINR balancing solution to problem (\ref{equ:common throughput}) can be expressed as
\begin{align}\label{eqn:optimal SINR balancing value}
\gamma(\mv{W},\mv{V})=\frac{1}{\max\limits_{1\leq k \leq K}\rho(\mv{A}_k(\mv{W},\mv{V}))}.
\end{align}Furthermore, define $k^\ast=\arg\max\limits_{1\leq k \leq K}\rho(\mv{A}_k(\mv{W},\mv{V}))$, and $\left(\begin{array}{c}\mv{p} \\ 1 \end{array} \right)$ as the dominant eigenvector of $\mv{A}_{k^\ast}(\mv{W},\mv{V})$ corresponding to $\rho(\mv{A}_{k^\ast}(\mv{W},\mv{V}))$, then $\mv{p}$ is the optimal power solution to problem (\ref{equ:common throughput}) to achieve $\gamma(\mv{W},\mv{V})$ given $\mv{W}$ and $\mv{V}$.
\end{theorem}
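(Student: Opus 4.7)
The plan is to extend the classical Schubert--Boche SINR-balancing framework to the current setting, where the extra handle $\mv{V}$ enters only through the per-user budgets $\bar{P}_k(\mv{V},\bar{\tau})$; once $\mv{W}$ and $\mv{V}$ are fixed, problem (\ref{equ:common throughput}) is a pure uplink power-control problem with numerically fixed individual power constraints. First, I would argue that the max-min optimum has all SINRs equal to the common value $\gamma$: if $\gamma_k > \gamma$ for some $k$, shrinking $p_k$ slightly keeps user $k$ above $\gamma$, strictly lowers the interference seen by every other user, and hence weakly increases the minimum. Next, I would show that at the optimum at least one power constraint is active, since a uniform scaling $\mv{p}\to c\mv{p}$ with $c>1$ strictly raises every SINR (interference scales with $c$ but the noise term $\sigma^2\|\mv{w}_k\|^2$ does not), contradicting optimality if all constraints were strict.

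Second, I would encode the $K$ tight SINR equalities
\[
p_k\|\mv{w}_k^H\mv{h}_k\|^2 \;=\; \gamma\Bigl(\sum_{j\neq k}p_j|\mv{w}_k^H\mv{h}_j|^2 + \sigma^2\|\mv{w}_k\|^2\Bigr)
\]
(with the $(1-\bar{\tau})$ absorbed into $\mv{\sigma}(\mv{W})$ per the paper's convention) compactly as $\mv{p}/\gamma = \mv{D}(\mv{W})\mv{\Psi}(\mv{W})\mv{p} + \mv{D}(\mv{W})\mv{\sigma}(\mv{W})$. Suppose index $k$ is the active power constraint, i.e., $p_k=\bar{P}_k(\mv{V},\bar{\tau})$. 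Dividing the $k$th row by $\bar{P}_k(\mv{V},\bar{\tau})$ and stacking with the original $K$ equations, the augmented vector $\tilde{\mv{p}}=(\mv{p}^T,1)^T$ satisfies $\mv{A}_k(\mv{W},\mv{V})\tilde{\mv{p}} = (1/\gamma)\tilde{\mv{p}}$, so $1/\gamma$ is an eigenvalue of the entrywise non-negative matrix $\mv{A}_k(\mv{W},\mv{V})$ with a non-negative eigenvector. Since $\mv{D}(\mv{W})\mv{\sigma}(\mv{W})>\mv{0}$ makes $\mv{A}_k(\mv{W},\mv{V})$ irreducible enough that the Perron--Frobenius theorem applies, the unique non-negative eigenvector corresponds to the Perron eigenvalue, forcing $1/\gamma = \rho(\mv{A}_k(\mv{W},\mv{V}))$, i.e., $\gamma = 1/\rho(\mv{A}_k(\mv{W},\mv{V}))$.

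Third, I would identify which index $k$ actually binds. For any candidate $\gamma$ with $\gamma\rho(\mv{D}\mv{\Psi})<1$ the fixed-point equation $\mv{p}=\gamma\mv{D}\mv{\Psi}\mv{p}+\gamma\mv{D}\mv{\sigma}$ has the unique non-negative solution $\mv{p}(\gamma)=\gamma(\mv{I}-\gamma\mv{D}\mv{\Psi})^{-1}\mv{D}\mv{\sigma}$, which by a Neumann-series argument is componentwise nondecreasing in $\gamma$. Sweeping $\gamma$ upward from $0$, the first index whose budget becomes tight is the optimum, and by the previous paragraph its threshold equals $1/\rho(\mv{A}_k(\mv{W},\mv{V}))$. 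The smallest such threshold is achieved at $k^\ast=\arg\max_{k}\rho(\mv{A}_k(\mv{W},\mv{V}))$, yielding $\gamma(\mv{W},\mv{V}) = 1/\max_{k}\rho(\mv{A}_k(\mv{W},\mv{V}))$, and the optimal $\mv{p}$ is recovered as the first $K$ entries of the Perron eigenvector of $\mv{A}_{k^\ast}(\mv{W},\mv{V})$ normalized so that its last entry equals one.

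The main obstacle I anticipate is the monotonicity and uniqueness reasoning that underpins the ``first binding constraint'' argument: one must verify that $\rho(\mv{D}\mv{\Psi})<1/\gamma$ holds throughout the relevant range, that the Perron eigenvector of each $\mv{A}_k(\mv{W},\mv{V})$ is strictly positive so the normalization sending its last coordinate to $1$ is legitimate, and that no user other than $k^\ast$ could yield a larger feasible $\gamma$ without violating its own budget. The strict positivity of the noise term $\mv{D}\mv{\sigma}$ is what ensures the required irreducibility of $\mv{A}_k(\mv{W},\mv{V})$ and makes Perron--Frobenius applicable cleanly.
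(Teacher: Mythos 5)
Your proposal is correct in substance, and its first half (equal SINRs at the optimum via the power-reduction argument, at least one tight budget via uniform up-scaling, and the augmented eigen-equation $\mv{A}_k(\mv{W},\mv{V})\,(\mv{p}^T,1)^T=\frac{1}{\gamma}(\mv{p}^T,1)^T$ for the binding index) coincides with the paper's Lemma \ref{lemma5} and equations (\ref{eqn1})--(\ref{eqn3}). The second half takes a genuinely different route. The paper proves a uniqueness lemma (Lemma \ref{lemma3}, extending Yang--Xu's Theorem 1 from a sum-power to an individual-power constraint via the sorted-ratio argument $\theta_{t_1}\geq\cdots\geq\theta_{t_K}$) for the balanced-SINR equations together with one designated tight constraint, and then identifies $k^\ast$ by observing that every budget inequality yields $\frac{1}{\gamma}\mv{p}_{\rm ext}\geq \mv{A}_k(\mv{W},\mv{V})\mv{p}_{\rm ext}$ and invoking Seneta's subinvariance theorem (Lemma \ref{lemma4}) to conclude $1/\gamma\geq\rho(\mv{A}_k(\mv{W},\mv{V}))$ for all $k$. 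You instead sweep the balanced solution $\mv{p}(\gamma)=\gamma(\mv{I}-\gamma\mv{D}\mv{\Psi})^{-1}\mv{D}\mv{\sigma}$ upward in $\gamma$, using Neumann-series monotonicity, and identify the optimum with the first binding budget. This is more constructive and subsumes the paper's uniqueness lemma (monotonicity makes each threshold unique), at the cost of having to control the behavior of $\mv{p}(\gamma)$ near $\gamma=1/\rho(\mv{D}\mv{\Psi})$.

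Two points in your sketch need repair, both fixable. First, your irreducibility claim is not accurate: $\mv{A}_k(\mv{W},\mv{V})$ can be reducible --- e.g., with ZF receivers $\mv{\Psi}(\mv{W})=\mv{0}$, so the bottom row of $\mv{A}_k$ vanishes except for its last entry and the associated digraph is not strongly connected. What actually rescues the step $1/\gamma=\rho(\mv{A}_k)$ is that the eigenvector $(\mv{p}^T,1)^T$ is \emph{strictly positive} (since $p_i\geq\gamma[\mv{D}\mv{\sigma}]_i>0$), and for any non-negative matrix $\mv{B}$ with a strictly positive eigenvector $\mv{x}$ the Collatz--Wielandt bounds $\min_i(\mv{B}\mv{x})_i/x_i\leq\rho(\mv{B})\leq\max_i(\mv{B}\mv{x})_i/x_i$ pin the eigenvalue to $\rho(\mv{B})$ with no irreducibility assumption (the paper is similarly terse on this point). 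Second, your sweep only shows $\gamma(\mv{W},\mv{V})=1/\rho(\mv{A}_{k_{\rm bind}})$ for the first-binding index; to get the stated formula $1/\max_k\rho(\mv{A}_k)$ you must also rule out $\rho(\mv{A}_j)>\rho(\mv{A}_{k_{\rm bind}})$ for users $j$ whose budgets never become tight during the sweep (which can happen when $\mv{D}\mv{\Psi}$ is reducible and a component of $\mv{p}(\gamma)$ stays bounded). You flagged this but did not close it; it is closed exactly by the paper's subinvariance step: at the optimum $p_j\leq\bar{P}_j(\mv{V},\bar{\tau})$ gives $\frac{1}{\gamma}\mv{p}_{\rm ext}\geq\mv{A}_j(\mv{W},\mv{V})\mv{p}_{\rm ext}$ with $\mv{p}_{\rm ext}>\mv{0}$, hence $\rho(\mv{A}_j(\mv{W},\mv{V}))\leq 1/\gamma$ for every $j$. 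With these two patches your argument is complete.
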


\begin{proof}
Please refer to Appendix \ref{appendix1}.
\end{proof}

Theorem \ref{theorem1} implies that problem (\ref{equ:common throughput}) is equivalent to the following spectral radius minimization problem.
\begin{align}\label{eqn:equivalent problem}\mathop{\mathtt{Minimize}}_{\mv{W},\mv{V}}
& ~~ \max\limits_{1\leq k \leq K}\rho(\mv{A}_k(\mv{W},\mv{V})) \nonumber \\ \mathtt{Subject \ to} & ~~ \sum\limits_{i=1}^l\|\mv{v}_i\|^2\leq P_{{\rm sum}}.\end{align}

Next, we propose an iterative algorithm to solve problem (\ref{eqn:equivalent problem}) by applying the alternating optimization technique \cite{Schubert04}. Specifically, by fixing the UL receive beamforming $\mv{W}=\bar{\mv{W}}$, we first optimize the DL energy beamforming $\mv{V}$ by solving the following DL problem:
\begin{align}\label{eqn:downlink}\mathop{\mathtt{Minimize}}_{\mv{V}}
& ~~ \max\limits_{1\leq k \leq K}\rho(\mv{A}_k(\bar{\mv{W}},\mv{V})) \nonumber \\ \mathtt{Subject \ to} & ~~ \sum\limits_{i=1}^l\|\mv{v}_i\|^2\leq P_{{\rm sum}}.\end{align}Let $\bar{\mv{V}}$ denote the optimal solution to problem (\ref{eqn:downlink}), then by fixing $\mv{V}=\bar{\mv{V}}$, we optimize $\mv{W}$ by solving the following UL problem:
\begin{align}\label{eqn:uplink}\mathop{\mathtt{Minimize}}_{\mv{W}}
~~ \max\limits_{1\leq k \leq K}\rho(\mv{A}_k(\mv{W},\bar{\mv{V}})). \end{align}The above procedure is iterated until both $\mv{W}$ and $\mv{V}$ converge.

First, consider problem (\ref{eqn:downlink}). For convenience, define $\mv{X}(\bar{\mv{W}})=\mv{D}(\bar{\mv{W}})\Psi(\bar{\mv{W}})$, and $\mv{y}(\bar{\mv{W}})= \mv{D}(\bar{\mv{W}})\mv{\sigma}(\bar{\mv{W}})=[y_1(\bar{\mv{W}}),\cdots,y_K(\bar{\mv{W}})]^T$. Furthermore, let $[\mv{X}(\bar{\mv{W}})]_{i,j}$ denote the entry on the $i$th row and $j$th column of $\mv{X}(\bar{\mv{W}})$, and $[\mv{e}_k^T\mv{X}(\bar{\mv{W}})]_j$ denote the $j$th entry of $\mv{e}_k^T\mv{X}(\bar{\mv{W}})$, $\forall k$. Then we have the following proposition.

\begin{proposition}\label{proposition1}
Problem (\ref{eqn:downlink}) is equivalent to the following problem:
\begin{align}\label{eqn:downlink equivalent convex problem}\mathop{\mathtt{Minimize}}_{\mv{S},\tilde{\mv{q}},\tilde{\theta}} & ~~ e^{(\tilde{\theta})} \nonumber \\
\mathtt{Subject \ to} & ~~ \sum\limits_{j=1}^K[\mv{X}(\bar{\mv{W}})]_{i,j}e^{\tilde{q}_j-\tilde{q}_i -\tilde{\theta}}+y_i(\bar{\mv{W}})e^{\tilde{q}_{K+1}-\tilde{q}_i -\tilde{\theta}} \leq 1, ~ 1\leq i \leq K, \nonumber \\ & ~~ \sum\limits_{j=1}^K[\mv{e}_k^T\mv{X}(\bar{\mv{W}})]_je^{\tilde{q}_j-\tilde{q}_{K+1} - \tilde{\theta}}+\mv{e}_k^T\mv{y}(\bar{\mv{W}})e^{-\tilde{\theta}} \leq \frac{\epsilon \bar{\tau}{\rm Tr}(\mv{G}_k\mv{S})}{1-\bar{\tau}}, ~ 1\leq k \leq K, \nonumber \\ & ~~ {\rm Tr}(\mv{S}) \leq P_{{\rm sum}}, \nonumber \\ & ~~ \mv{S}\succeq \mv{0}, \end{align}where $\mv{S}=\sum_{i=1}^l\mv{v}_i\mv{v}_i^H$, and $\mv{G}_k=\mv{g}_k\mv{g}_k^H$, $\forall k$.
\end{proposition}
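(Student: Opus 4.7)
The plan is to reformulate the spectral-radius inequality $\max_k \rho(\mv{A}_k(\bar{\mv{W}}, \mv{V})) \leq \theta$ by invoking the Collatz--Wielandt characterization for non-negative matrices, which states that $\rho(\mv{A}_k) \leq \theta$ if and only if there exists some $\mv{q}^{(k)} > \mv{0}$ with $\mv{A}_k \mv{q}^{(k)} \leq \theta \mv{q}^{(k)}$ component-wise. First I would introduce $\theta$ as the epigraph variable for the objective and apply this characterization individually to each $\mv{A}_k$, noting the key structural feature: all $\mv{A}_k$'s share the same top $K$ rows (depending only on $\bar{\mv{W}}$), while only the $(K+1)$-th row is $k$-dependent through $\bar{P}_k(\mv{V}, \bar{\tau})$.

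Second, I would argue that $\max_k \rho(\mv{A}_k) \leq \theta$ is in fact equivalent to the existence of a \emph{common} positive $\mv{q}$ that simultaneously satisfies $\mv{A}_k \mv{q} \leq \theta \mv{q}$ for every $k$. One direction is immediate from Collatz--Wielandt. For the nontrivial direction, at the optimum I would choose $\mv{q}^\ast = (\mv{p}, 1)^T$ to be the Perron eigenvector of $\mv{A}_{k^\ast}$ with $k^\ast = \arg\max_k \rho(\mv{A}_k)$, as constructed in Theorem \ref{theorem1}. The first $K$ entries of $\mv{A}_k \mv{q}^\ast$ coincide with those of $\mv{A}_{k^\ast} \mv{q}^\ast = \theta^\ast \mv{q}^\ast$ by the shared-rows structure, so the first $K$ inequalities hold with equality. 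For the $(K+1)$-th row with $k \neq k^\ast$, substituting the top-block eigen-equation $\mv{X}(\bar{\mv{W}})\mv{p} + \mv{y}(\bar{\mv{W}}) = \theta^\ast \mv{p}$ into the definition of $\mv{A}_k$ in (\ref{eqn:matrix Ak}) gives $[\mv{A}_k \mv{q}^\ast]_{K+1} = \theta^\ast p_k / \bar{P}_k(\mv{V}, \bar{\tau})$, and since Theorem \ref{theorem1} identifies $\mv{p}$ as the optimal (hence feasible) UL power allocation for problem (\ref{equ:common throughput}), we have $p_k \leq \bar{P}_k(\mv{V}, \bar{\tau})$, yielding $[\mv{A}_k \mv{q}^\ast]_{K+1} \leq \theta^\ast$ as required.

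Third, it remains to perform routine manipulations to match the stated form. I would introduce $\mv{S} = \sum_{i=1}^l \mv{v}_i \mv{v}_i^H$ so that $\bar{P}_k(\mv{V}, \bar{\tau}) = \epsilon \bar{\tau}\, {\rm Tr}(\mv{G}_k \mv{S})/(1-\bar{\tau})$ becomes linear in $\mv{S}$, and replace the sum-power constraint by ${\rm Tr}(\mv{S}) \leq P_{\rm sum}$, $\mv{S} \succeq \mv{0}$ — an equivalent parameterization, since any such $\mv{S}$ can be factored as $\sum_i \mv{v}_i \mv{v}_i^H$ via eigendecomposition and vice versa. Next, I would apply the log substitution $q_j = e^{\tilde{q}_j}$ and $\theta = e^{\tilde{\theta}}$, divide the first $K$ inequalities by $\theta q_i$ and each of the last $K$ by $\theta q_{K+1}$, and rearrange; this produces precisely the posynomial-in-log form constraints displayed in (\ref{eqn:downlink equivalent convex problem}), with objective $e^{\tilde{\theta}}$.

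The main obstacle is the reduction in Step 2: certifying $\max_k \rho(\mv{A}_k) \leq \theta^\ast$ using a \emph{single} common $\mv{q}$ rather than $K$ separate vectors $\mv{q}^{(k)}$. This crucially leverages both the shared-top-$K$-rows structure of the family $\{\mv{A}_k\}$ and the feasibility property of the Perron eigenvector provided by Theorem \ref{theorem1}, whereas the subsequent matrix lifting and log substitution are standard geometric-programming-style reformulations that additionally make the resulting problem jointly convex in $(\tilde{\mv{q}}, \tilde{\theta}, \mv{S})$.
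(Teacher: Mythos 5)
Your proposal is correct and takes essentially the same route as the paper: recasting problem (\ref{eqn:downlink}) as minimizing $\theta$ subject to $\mv{A}_k(\bar{\mv{W}},\mv{V})\mv{q}\leq \theta\mv{q}$, $\forall k$, for a common $\mv{q}>\mv{0}$ via Lemma \ref{lemma4}, then lifting to $\mv{S}=\sum_{i=1}^l\mv{v}_i\mv{v}_i^H$ and applying the log change of variables to reach the convex form (\ref{eqn:downlink equivalent convex problem}). Your Step 2 simply spells out the achievability direction that the paper's Appendix \ref{appendix2} leaves implicit---it is precisely inequality (\ref{eqn:matrix k}) from the proof of Theorem \ref{theorem1}, with the dominant eigenvector of $\mv{A}_{k^\ast}$ serving as the common $\mv{q}$---so it is a clarification rather than a different argument.
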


\begin{proof}
Please refer to Appendix \ref{appendix2}.
\end{proof}

Thanks to the fact that $\mv{A}_k(\bar{\mv{W}},\mv{V})$'s are all non-negative matrices, problem (\ref{eqn:downlink equivalent convex problem}) is a convex optimization problem, which thus can be efficiently solved by CVX \cite{Boyd11}. Let $\bar{\mv{S}}$ denote the optimal covariance solution to problem (\ref{eqn:downlink equivalent convex problem}); then the optimal $l={\rm rank}(\bar{\mv{S}})$ number of DL energy beams, i.e., $\bar{\mv{V}}=\{\bar{\mv{v}}_1,\cdots,\bar{\mv{v}}_l\}$, for problem (\ref{eqn:downlink}) can be obtained by computing the eigenvalue decomposition (EVD) of $\bar{\mv{S}}$.

Next, consider problem (\ref{eqn:uplink}). Since this problem has been solved by \cite{Zhanglan08}, we refer the readers to the algorithm given in Table IV of \cite{Zhanglan08} for the solution.

Last, by iteratively solving problems (\ref{eqn:downlink}) and (\ref{eqn:uplink}), we can solve problem (\ref{eqn:equivalent problem}), for which the overall algorithm is summarized in Table \ref{table5}. Since the objective value of problem (\ref{eqn:equivalent problem}) is increased after each iteration, a monotonic convergence can be guaranteed for Algorithm \ref{table5}. However, since problem (\ref{eqn:equivalent problem}) is a non-convex optimization problem, in general whether the converged solution is globally optimal to problem (\ref{eqn:equivalent problem}) remains unknown. In the following theorem, we show the global convergence of Algorithm \ref{table5}.

 \begin{table}[htp]
\begin{center}
\caption{\textbf{Algorithm \ref{table5}}: Algorithm for Solving
Problem (\ref{eqn:equivalent problem})} \vspace{0.2cm}
 \hrule
\vspace{0.2cm}
\begin{itemize}
\item[a)] Initialize a feasible $\mv{V}^{(1)}$ and update $\mv{W}^{(1)}$ via the algorithm in Table IV of \cite{Zhanglan08}. Set $\rho^{(1)}=\max\limits_{1\leq k \leq K} \rho(\mv{A}_k(\mv{W}^{(1)},\mv{V}^{(1)}))$ and $n=1$;
\item[b)] {\bf repeat}
\begin{itemize}
\item[1)] $n=n+1$;
\item[2)] DL: fix $\mv{W}=\mv{W}^{(n-1)}$ and update $\mv{V}^{(n)}$ by solving problem (\ref{eqn:downlink equivalent convex problem});
\item[3)] UL: fix $\mv{V}=\mv{V}^{(n)}$ and update $\mv{W}^{(n)}$ via the algorithm in Table IV of \cite{Zhanglan08};
\item[4)] Set $\rho^{(n)}=\max\limits_{1\leq k \leq K} \rho(\mv{A}_k(\mv{W}^{(n)},\mv{V}^{(n)}))$;
\end{itemize}
\item[c)] {\bf until} $\rho^{(n-1)}-\rho^{(n)}<\varepsilon$, where $\varepsilon$ is a small positive number to control the algorithm accuracy.
\end{itemize}
\vspace{0.2cm} \hrule \label{table5}
\end{center}
\end{table}

\begin{theorem}\label{theorem2}
Algorithm \ref{table5} converges to the globally optimal solution to problem (\ref{eqn:equivalent problem}).
\end{theorem}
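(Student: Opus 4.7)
My plan is to establish monotonic convergence first, and then argue global optimality of the limit by exploiting the structure inherited from non-negative matrix theory.

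First, I would verify that the sequence $\{\rho^{(n)}\}$ generated by Algorithm~\ref{table5} is monotonically non-increasing and bounded below, hence convergent. The key observation is that each subproblem in an iteration is solved to global optimality: the DL step solves problem (\ref{eqn:downlink}) via its equivalent convex reformulation (\ref{eqn:downlink equivalent convex problem}) guaranteed by Proposition~\ref{proposition1}, while the UL step solves problem (\ref{eqn:uplink}) to global optimality via the algorithm in Table~IV of \cite{Zhanglan08}. Since $\mv{V}^{(n-1)}$ and $\mv{W}^{(n-1)}$ are feasible points of the DL and UL subproblems in iteration $n$, respectively, we obtain $\rho^{(n)}\leq \rho^{(n-1)}$ at each step. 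Non-negativity of $\rho(\mv{A}_k)$ then yields convergence of $\rho^{(n)}$ to some $\rho^{\infty}\geq 0$.

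Next, I would establish existence and characterization of a limit point $(\mv{W}^{\infty},\mv{V}^{\infty})$. The feasible region for $\mv{V}$ is compact thanks to the sum-power constraint, and $\mv{W}$ can be normalized without loss of generality since the objective is scale-invariant in each $\mv{w}_k$; hence there is a convergent subsequence. At the limit, continuity of $\rho(\mv{A}_k(\mv{W},\mv{V}))$ in both arguments together with optimality at each iteration would force $\mv{V}^{\infty}$ to be an optimal solution of (\ref{eqn:downlink}) with $\bar{\mv{W}}=\mv{W}^{\infty}$, and $\mv{W}^{\infty}$ to be an optimal solution of (\ref{eqn:uplink}) with $\bar{\mv{V}}=\mv{V}^{\infty}$.

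The main obstacle, and the real content of the theorem, is to show that such a joint fixed point is globally optimal for (\ref{eqn:equivalent problem}) rather than merely a stationary point of the alternating scheme. For this I would proceed by contradiction: suppose there exists $(\mv{W}^{\star},\mv{V}^{\star})$ with $\max_k\rho(\mv{A}_k(\mv{W}^{\star},\mv{V}^{\star}))<\rho^{\infty}$. I would invoke the Perron--Frobenius structure of $\mv{A}_k$ (Theorem~\ref{theorem1}), which expresses $\rho(\mv{A}_k)$ as the common-SINR-achieving inverse level via the Collatz--Wielandt min-max formula, to show that the primitive non-negative matrix $\mv{A}_{k^{\ast}}$ at the limit has a unique (up to scaling) dominant eigenpair $(\mv{p}^{\infty},1)$. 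Using this, together with the UL-DL duality argument behind the derivation in Appendix~\ref{appendix1} and the convex structure of (\ref{eqn:downlink equivalent convex problem}), one can construct from $(\mv{W}^{\star},\mv{V}^{\star})$ an improving direction for either the DL subproblem at $\mv{W}^{\infty}$ or the UL subproblem at $\mv{V}^{\infty}$, contradicting fixed-point optimality. The monotonic structure of $\mv{A}_k$ with respect to entrywise ordering is what makes the contradiction yield a \emph{global}, rather than local, conclusion.

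Finally, I would note that once global optimality is established for (\ref{eqn:equivalent problem}), global optimality for the equivalent SINR balancing problem (\ref{equ:common throughput}) follows immediately from Theorem~\ref{theorem1}, which completes the proof.
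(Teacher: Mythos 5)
Your first two steps (monotone non-increase of $\rho^{(n)}$ because each subproblem is solved globally, and extraction of a limit point that is a coordinate-wise optimum of both problems (\ref{eqn:downlink}) and (\ref{eqn:uplink})) are sound and consistent with the paper's setup. The genuine gap is in your third step, which is where the entire content of the theorem lies. You assume a strictly better joint point $(\mv{W}^{\star},\mv{V}^{\star})$ and assert that one can ``construct an improving direction for either the DL subproblem at $\mv{W}^{\infty}$ or the UL subproblem at $\mv{V}^{\infty}$,'' but you give no mechanism, and none exists at this level of generality: a fixed point of alternating optimization is only coordinate-wise optimal, and for a non-convex joint problem the existence of a better point differing in \emph{both} blocks does not imply any improvement along a single block---this is precisely why alternating optimization is not globally optimal in general. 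Invoking Perron--Frobenius theory, uniqueness of the dominant eigenpair, and ``monotonic structure'' does not by itself bridge this; all of those facts are consistent with the scheme stalling at a non-global coordinate-wise optimum, so the contradiction you aim for never materializes as written.

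What the paper does instead---and what your sketch is missing---is a direct lower bound on the global optimum built from the fixed point itself, rather than an improving-direction argument from a hypothetical better point. Writing the optimal value $\rho^{\ast}$ via the Collatz--Wielandt formula (Lemma \ref{lemma8}) as a $\min_{\mv{W}}\min_{\mv{V}}\max_{k}\max_{\mv{y}_k}\min_{j_k}$ of entrywise ratios, the paper freezes the test vector at $\mv{y}_k=\tilde{\mv{p}}_{{\rm ext}}$, the dominant eigenvector at the fixed point, obtaining a lower bound $\bar{\rho}^{\ast}\leq\rho^{\ast}$. The crucial structural observation is a decoupling: in the resulting ratios (\ref{eqn:a3}), $\mv{W}$ enters only through the SINRs $\gamma_k(\tilde{\mv{p}},\mv{w}_k)$ while $\mv{V}$ enters only through $\bar{P}_k(\mv{V},\bar{\tau})$ in the last row, so the MMSE receiver $\tilde{\mv{W}}$ minimizes every ratio \emph{simultaneously for all} $\mv{V}$, allowing the $\min_{\mv{W}}$ to be evaluated at $\tilde{\mv{W}}$ uniformly in $\mv{V}$. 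The remaining minimization over $\mv{V}$ is then pinned at $\tilde{\mv{V}}$ by a contradiction that does work: if some $\mv{V}'$ did strictly better, then $\tilde{p}_k<\bar{P}_k(\mv{V}',\bar{\tau})$ for all $k$, whence Lemma \ref{lemma4} forces $\max_{k}\rho(\mv{A}_k(\tilde{\mv{W}},\mv{V}'))<\rho(\mv{A}_{\tilde{k}^{\ast}}(\tilde{\mv{W}},\tilde{\mv{V}}))$, contradicting the DL-subproblem optimality of $\tilde{\mv{V}}$---a \emph{single-block} contradiction, which is exactly the kind that fixed-point optimality can refute. This yields $\rho^{\ast}\geq\bar{\rho}^{\ast}=\max_{k}\rho(\mv{A}_k(\tilde{\mv{W}},\tilde{\mv{V}}))$, and feasibility of $(\tilde{\mv{W}},\tilde{\mv{V}})$ gives the reverse inequality. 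Without this fixed-test-vector decoupling (or an equivalent device), your proposal cannot be completed.
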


\begin{proof}
Please refer to Appendix \ref{appendix3}.
\end{proof}

Due to the equivalence between problems (\ref{equ:common throughput}) and (\ref{eqn:equivalent problem}) shown in Theorem \ref{theorem1}, Theorem \ref{theorem2} implies that we can apply Algorithm \ref{table5} to obtain the optimal solution to problem (\ref{equ:common throughput}). Let $\mv{W}^\ast$ and $\mv{V}^\ast$ denote the optimal solution to problem (\ref{eqn:equivalent problem}) obtained by Algorithm \ref{table5}. We define $k^\ast=\arg \max\limits_{1\leq k \leq K} \rho(\mv{A}_k(\mv{W}^\ast,\mv{V}^\ast))$. Then, according to Theorem \ref{theorem1}, the optimal value of problem (\ref{equ:common throughput}), $\gamma^\ast$, is equal to $\frac{1}{\rho(\mv{A}_{k^\ast}(\mv{W}^\ast,\mv{V}^\ast))}$, and the optimal power solution $\mv{p}^\ast$ can be obtained from the dominant eigenvector of $\mv{A}_{k^\ast}(\mv{W}^\ast,\mv{V}^\ast)$, i.e., $\left(\begin{array}{c}\mv{p}^\ast \\ 1 \end{array} \right)$.

\section{Suboptimal Design}\label{A ZF-Based Low-Complexity Suboptimal Solution}
In the previous section, we propose the optimal algorithm to solve problem (\ref{equ:minimum throughput maximization}) based on the techniques of alternating optimization and non-negative matrix theory. Note that the optimal algorithm requires a joint optimization of the DL energy beams $\mv{V}$ and the UL transmit power allocation $\mv{p}$ plus receive beamforming $\mv{W}$. Moreover, the optimal time allocation for $\tau$ needs to be obtained by an exhaustive search. In this section, we propose two suboptimal solutions for problem (\ref{equ:minimum throughput maximization}) under the assumption that the number of users is no larger than that of antennas at the AP, i.e., $K\leq M$; hence, in the UL, the AP can employ the suboptimal ZF-based receivers (instead of MMSE-based receivers in the optimal algorithms) to completely eliminate the inter-user interference, which simplifies the design as shown next.

Define $\mv{H}_{-k}=[\mv{h}_1,\cdots,\mv{h}_{k-1},\mv{h}_{k+1},\cdots,\mv{h}_K]^H$, $k=1,\cdots,K$, which constitutes all the UL channels except $\mv{h}_k$. Then with ZF-based receivers in the UL, we aim to solve problem (\ref{equ:minimum throughput maximization}) with the additional constraints: $\mv{H}_{-k}\mv{w}_k=\mv{0}$, $\forall k$. Let the singular value decomposition (SVD) of $\mv{H}_{-k}$ be denoted as
\begin{align}
\mv{H}_{-k}=\mv{X}_k\mv{\Lambda}_k\mv{Y}_k^H=\mv{X}_k\mv{\Lambda}_k[\bar{\mv{Y}}_k \ \tilde{\mv{Y}}_k]^H,
\end{align}where $\mv{X}_k\in \mathbb{C}^{(K-1)\times (K-1)}$ and $\mv{Y}_k\in \mathbb{C}^{M\times M}$ are unitary matrices, and $\mv{\Lambda}_k$ is a $(K-1)\times M$ rectangular diagonal matrix. Furthermore, $\bar{\mv{Y}}_k\in \mathbb{C}^{M\times (K-1)}$ and $\tilde{\mv{Y}}_k\in \mathbb{C}^{M\times (M-K+1)}$ consist of the first $K-1$ and the last $M-K+1$ right singular vectors of $\mv{H}_{-k}$, respectively. Note that $\tilde{\mv{Y}}_k$ forms an orthogonal basis for the null space of $\mv{H}_{-k}$, thus $\mv{w}_k$ must be in the following form: $\mv{w}_k=\tilde{\mv{Y}}_k\tilde{\mv{w}}_k$, $\forall k$, where $\tilde{\mv{w}}_k$ is an arbitrary $(M-K+1)\times 1$ complex vector of unit norm. It can be shown that to maximize the rate of $U_k$, $\tilde{\mv{w}}_k$ should be aligned to the same direction as the equivalent channel $\tilde{\mv{Y}}_k^H\mv{h}_k$. Thus, we have\begin{align}\label{eqn:opt ZF}
\mv{w}_k^{{\rm ZF}}=\frac{\tilde{\mv{Y}}_k\tilde{\mv{Y}}_k^H\mv{h}_k}{\|\tilde{\mv{Y}}_k^H\mv{h}_k\|}, ~~ k=1,\cdots,K.
\end{align}Note that unlike the MMSE-based receivers in Section \ref{sec:Optimal Solution}, the above ZF receivers are not related to $\mv{p}$ and hence do not depend on $\mv{V}$ and $\tau$.

With the ZF receivers given in (\ref{eqn:opt ZF}), the throughput of $U_k$ given in (\ref{eqn:throughput}) reduces to
\begin{align}\label{eqn:throughput ZF}
R_k^{{\rm ZF}}=(1-\tau)\log_2\left(1+\frac{\tilde{h}_kp_k}{\sigma^2}\right), ~~ k=1,\cdots,K,
\end{align}where $\tilde{h}_k=\|\tilde{\mv{Y}}_k^H\mv{h}_k\|^2$ denotes the power of the equivalent UL channel for $U_k$. Based on the achievable rate expression given in (\ref{eqn:throughput ZF}) with ZF receive beamforming, we further propose two suboptimal solutions to obtain $\tau$, $\mv{p}$, and $\mv{V}$ for problem (\ref{equ:minimum throughput maximization}) in the following two subsections, respectively.

\subsection{Suboptimal Solution 1}

With (\ref{eqn:throughput ZF}), problem (\ref{equ:minimum throughput maximization}) reduces to
\begin{align}\label{eqn:ZF problem}\mathop{\mathtt{Maximize}}_{\tau,\mv{p},\mv{V}}
& ~~ \min\limits_{1\leq k \leq K}(1-\tau)\log_2\left(1+\frac{\tilde{h}_kp_k}{\sigma^2}\right) \nonumber \\ \mathtt{Subject \ to} & ~~ 0< \tau < 1, \nonumber \\ & ~~ p_k\leq \bar{P}_k(\mv{V},\tau), ~~ \forall k, \nonumber \\ & ~~ \sum\limits_{i=1}^l\|\mv{v}_i\|^2\leq P_{{\rm sum}}.\end{align}Define $\tilde{p}_k=(1-\tau)p_k$, $\forall k$, and $\tilde{\mv{S}}=\tau \sum_{i=1}^l\mv{v}_i\mv{v}_i^H$. By introducing a common throughput requirement $\bar{R}$, problem (\ref{eqn:ZF problem}) can be transformed into the following equivalent problem.
\begin{align}\label{eqn:equivalent ZF problem}\mathop{\mathtt{Maximize}}_{\tau,\tilde{\mv{p}},\tilde{\mv{S}},\bar{R}}
& ~~ \bar{R} \nonumber \\ \mathtt{Subject \ to} & ~~ (1-\tau)\log_2\left(1+\frac{\tilde{h}_k\tilde{p}_k}{(1-\tau)\sigma^2}\right) \geq \bar{R}, ~~ \forall k, \nonumber \\ & ~~ 0< \tau < 1, \nonumber \\ & ~~ \tilde{p}_k\leq \epsilon {\rm Tr}(\mv{G}_k\tilde{\mv{S}}), ~~ \forall k, \nonumber \\ & ~~ {\rm Tr}(\tilde{\mv{S}})\leq \tau P_{{\rm sum}},\end{align}where $\tilde{\mv{p}}=\{\tilde{p}_1,\cdots,\tilde{p}_K\}$.

Problem (\ref{eqn:equivalent ZF problem}) can be shown to be convex, and thus it can be solved efficiently by e.g., the interior-point method \cite{Boyd04}. Let $\tau^{(1)}$, $\tilde{\mv{p}}^{(1)}$, $\tilde{\mv{S}}^{(1)}$ and $\bar{R}^{(1)}$ denote the optimal solution to problem (\ref{eqn:equivalent ZF problem}). Then the optimal power allocation solution to problem (\ref{eqn:ZF problem}) can be obtained as $p_k^{(1)}=\tilde{p}_k^{(1)}/(1-\tau^{(1)})$, and the optimal $l^{(1)}={\rm rank}(\tilde{\mv{S}}^{(1)})$ number of energy beams $\mv{v}_i^{(1)}$'s can be obtained by the EVD of $\tilde{\mv{S}}^{(1)}/\tau^{(1)}$.

\subsection{Suboptimal Solution 2}

Problem (\ref{eqn:ZF problem}) still requires a joint optimization of $\mv{V}$, $\tau$ and $\mv{p}$. To further reduce the complexity, in this subsection we propose another suboptimal solution for problem (\ref{eqn:ZF problem}) by separating the optimization of DL energy beamforming and UL power allocation. First, the DL energy beams $\mv{v}_i$'s are obtained by solving the following weighted sum-energy maximization problem.
\begin{align}\label{eqn:maximum energy}~\mathop{\mathtt{Maximize}}_{\mv{V}}
& ~~~  \sum\limits_{k=1}^K \alpha_k \epsilon \left(\sum\limits_{i=1}^l|\mv{g}_k^H\mv{v}_i|^2\right) \nonumber \\
\mathtt {Subject \ to} & ~~~
\sum\limits_{i=1}^l\|\mv{v}_i\|^2 \leq P_{{\rm sum}},
\end{align}where $\alpha_k\geq 0$ denotes the energy weight for $U_k$. Note that intuitively, to guarantee the rate fairness among the users, in the DL we should transfer more energy to users with weaker channels (e.g., more distant from the AP) by assigning them with higher energy weights. Therefore, we propose the following energy weight assignment rule that takes the doubly near-far effect into account: $\alpha_k=1/(\tilde{h}_k\|\mv{g}_k\|^2)$, $k=1,\cdots,K$. Let $\psi$ and $\mv{\eta}$ denote the maximum eigenvalue and its corresponding unit-norm eigenvector of the matrix $\sum_{k=1}^K\alpha_k\epsilon\mv{G}_k$, respectively. From \cite{RuiICCASP}, the optimal value of problem (\ref{eqn:maximum energy}) given a set of $\alpha_k$'s is then obtained as
$E_{{\rm max}}=\psi P_{{\rm sum}}$, which is achieved by $l^{(2)}=1$ and $\mv{v}_1^{(2)}=\sqrt{P_{{\rm sum}}}\mv{\eta}$, i.e., only one energy beam is used. Next, by substituting $\mv{v}_1^{(2)}$ into problem (\ref{eqn:ZF problem}), the corresponding optimal time allocation $\tau^{(2)}$ and power allocation $\mv{p}^{(2)}$ can be obtained by solving the following problem:
\begin{align}\label{eqn:equivalent subooptimal 2}\mathop{\mathtt{Maximize}}_{\tau,\tilde{\mv{p}},\bar{R}}
& ~~ \bar{R} \nonumber \\ \mathtt{Subject \ to} & ~~ (1-\tau)\log_2\left(1+\frac{\tilde{h}_k\tilde{p}_k}{(1-\tau)\sigma^2}\right) \geq \bar{R}, ~~ \forall k, \nonumber \\ & ~~ 0< \tau < 1, \nonumber \\ & ~~ \tilde{p}_k\leq \epsilon \tau\|\mv{g}_k^H\mv{v}_1^{(2)}\|^2, ~~ \forall k. \end{align}

It is worth noting that all $U_k$'s should transmit at full power in the UL in this case since they cause no interference to each other due to the ZF receivers used at the AP. As a result, without loss of generality we can substitute $\tilde{p}_k=\epsilon \tau \|\mv{g}_k^H\mv{v}_1^{(2)}\|^2$ into problem (\ref{eqn:equivalent subooptimal 2}) to remove the variable $\tilde{\mv{p}}$, which results in the following equivalent problem:
\begin{align}\label{eqn:equivalent subooptimal zf}\mathop{\mathtt{Maximize}}_{\tau,\bar{R}}
& ~~ \bar{R} \nonumber \\ \mathtt{Subject \ to} & ~~ (1-\tau)\log_2\left(1+\frac{\tilde{h}_k\epsilon \|\mv{g}_k^H\mv{v}_1^{(2)}\|^2\tau}{(1-\tau)\sigma^2}\right) \geq \bar{R}, ~~ \forall k, \nonumber \\ & ~~ 0< \tau < 1.\end{align}

It can be shown that $(1-\tau)\log_2\left(1+\tilde{h}_k\epsilon \|\mv{g}_k^H\mv{v}_1^{(2)}\|^2\tau/(1-\tau)\sigma^2\right)$ is a concave function over $0<\tau<1$, and thus problem (\ref{eqn:equivalent subooptimal zf}) is a convex optimization problem, which can be solved efficiently by the interior-point method \cite{Boyd04}. Alternatively, we can apply the bisection method \cite{Boyd04} to search for the optimal $\bar{R}$, while with given $\bar{R}$, the optimal time allocation $\tau$ can be efficiently obtained by solving a convex feasibility problem, for which the details are omitted here for brevity.

\section{Numerical Results}\label{sec:Numerical Results}

In this section, we provide numerical examples to validate our results. We consider a multi-antenna WPCN in which the AP is equipped with $M=6$ antennas, and there are $K=4$ users.\footnote{Note that $K\leq M$ holds in our example; thus, the two ZF receiver based solutions in Section \ref{A ZF-Based Low-Complexity Suboptimal Solution} are both feasible.} We set $P_{{\rm sum}}=1$Watt (W) or $30$dBm, $\epsilon=50\%$, and $\sigma^2=-50$dBm. The distance-dependent pass loss model is given by
\begin{align}\label{eqn:pass loss}
L_k=A_0\left(\frac{d_k}{d_0}\right)^{-\alpha}, ~~~ k=1,\cdots,K,
\end{align}where $A_0$ is set to be $10^{-3}$, $d_k$ denotes the distance between $U_k$ and AP, $d_0$ is a reference distance set to be $1$m, and $\alpha$ is the path loss exponent set to be $3$. Moreover, we assume that the channel reciprocity holds for the UL and DL channels, i.e., $\mv{h}_k=\mv{g}_k$, $\forall k$. The channel vectors $\mv{g}_k$'s are generated from independent Rician fading. Specifically, $\mv{g}_k$ is expressed as
\begin{equation}
\mv{g}_k = \sqrt{\frac{K_R}{1+K_R}}\mv{g}_k^{{\rm LOS}}+\sqrt{\frac{1}{1+K_R}}\mv{g}_k^{{\rm NLOS}}, ~~ \forall k,
\end{equation}where $\mv{g}_k^{{\rm LOS}}\in \mathbb{C}^{M\times 1}$ is the line of sight (LOS) deterministic component, $\mv{g}_k^{{\rm NLOS}}\in \mathbb{C}^{M\times 1}$ denotes the Rayleigh fading
component consisting of i.i.d. CSCG RVs with zero mean and unit covariance, and $K_R$ is the Rician factor set to be 3. Note that for the LOS component, we use the far-field uniform linear antenna array model with $\mv{g}_k^{{\rm LOS}}=[1~e^{j\theta_k}~e^{j2\theta_k}~\ldots~e^{j(N_t-1)\theta_k}~]^T$ with $\theta_k=-\frac{2\pi d^{{\rm an}} \sin(\varphi_k)}{\lambda}$, where $d^{{\rm an}}$ is the spacing between successive antenna elements at the AP, $\lambda$ is the carrier wavelength, and $\varphi_k$ is the direction of $U_k$ to the AP. We set $d^{{\rm an}}=\frac{\lambda}{2}$, and $\{\varphi_1,\varphi_2,\varphi_3,\varphi_4\}=\{-45^{\textrm{o}}, -15^{\textrm{o}}, 15^{\textrm{o}}, 45^{\textrm{o}}\}$. The average power of $\mv{g}_k$ is then normalized by $L_k$ in (\ref{eqn:pass loss}).

\subsection{Optimal Solution}

In this subsection, we investigate the performance of the optimal solution proposed in Section \ref{sec:Optimal Solution}. In this numerical result, we set $d_1=1$m, $d_2=1.4$m, $d_3=1.8$m, and $d_4=2$m. Specifically, the channels $\bar{\mv{H}}=[\mv{h}_1,\cdots,\mv{h}_4]$ and $\bar{\mv{G}}=[\mv{g}_1,\cdots,\mv{g}_4]$ are given by
\begin{align}
\bar{\mv{H}}&=\bar{\mv{G}}\nonumber \\ & =\left[\begin{array}{cccc}0.0082+0.0085i & 0.01371-0.0022i & 0.0133+0.0077i & 0.0081-0.0004i \\ 0.0021 + 0.0110i &  0.0383 + 0.0125i  & 0.0162 + 0.0061i &  0.0113 - 0.0051i \\ -0.0246 - 0.0104i &  0.0172 + 0.0271i  & 0.0236 + 0.0125i &  0.0003 - 0.0136i \\ -0.0184 - 0.0174i & -0.0364 + 0.0023i &  0.0194 + 0.0031i & -0.0131 - 0.0110i \\ 0.0411 + 0.0017i & -0.0371 - 0.0106i & -0.0032 - 0.0064i & -0.0161 + 0.0009i \\ -0.0002 + 0.0516i & -0.0172 - 0.0160i &  0.0202 - 0.0014i & -0.0151 + 0.0075i
        \end{array}\right].
\end{align}

First, we investigate the impact of $\tau$ on the max-min throughput among $U_k$'s. Let $R_{{\rm MMSE}}(\bar{\tau})$ denote the max-min throughput achieved by MMSE receivers given the time allocation $\bar{\tau}$. For the purpose of comparison, we also study the max-min throughput achieved by ZF receivers, denoted by $R_{{\rm ZF}}(\bar{\tau})$. Note that $R_{{\rm MMSE}}(\bar{\tau})=(1-\bar{\tau})\log_2(1+g(\bar{\tau}))$. Also note that $R_{{\rm ZF}}(\bar{\tau})$ can be obtained by solving problem (\ref{eqn:ZF problem}) with $\tau=\bar{\tau}$. Fig. \ref{fig4} shows $R_{{\rm MMSE}}(\bar{\tau})$ versus $R_{{\rm ZF}}(\bar{\tau})$ over $0<\bar{\tau}<1$. It is observed that both $R_{{\rm MMSE}}(\bar{\tau})$ and $R_{{\rm ZF}}(\bar{\tau})$ are first increasing and then decreasing over $\bar{\tau}$. The reason is as follows. It can be observed from (\ref{eqn:throughput}) that when $\bar{\tau}$ is small, the available transmit power for users given in (\ref{eq:available power}) is the dominant factor and thus increasing $\tau$ increases the DL energy transfer time and hence the UL transmit power and throughput. However, when $\bar{\tau}$ becomes large, the UL transmission time becomes the limiting factor and as a result increasing $\tau$ decreases the UL transmission time and thus the throughput. It is also observed that MMSE receiver achieves higher throughput than ZF receiver for any given $\bar{\tau}$.

\begin{figure}
\centering
 \epsfxsize=0.8\linewidth
    \includegraphics[width=10cm]{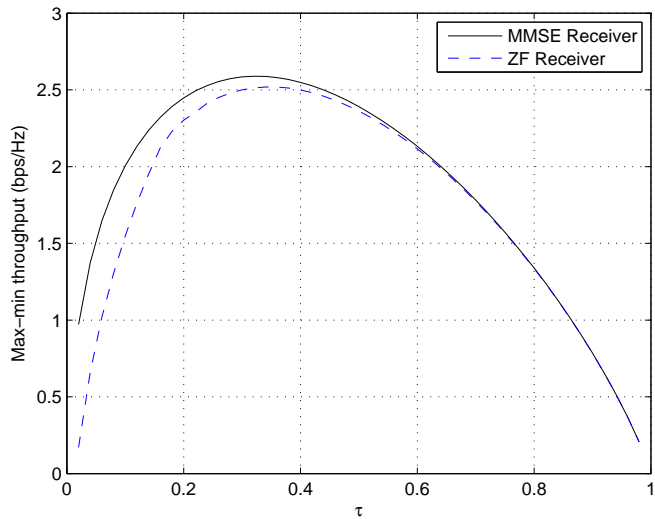}
\caption{$R_{{\rm MMSE}}(\bar{\tau})$ versus $R_{{\rm ZF}}(\bar{\tau})$.}
\label{fig4}
\end{figure}

\begin{figure}
\centering
 \epsfxsize=0.8\linewidth
    \includegraphics[width=10cm]{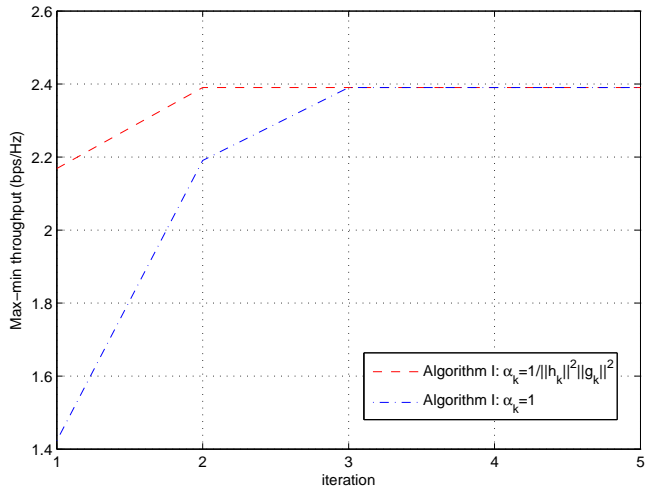}
\caption{Max-min throughput achieved by Algorithm \ref{table5} versus iteration when $\tau=0.5$ with different initial points.}
\label{fig6}
\end{figure}

Next, we study the performance of the optimal solutions to problem (\ref{equ:common throughput}) proposed in Section \ref{sec:Optimal Solution} with $\tau=0.5$. Fig. \ref{fig6} shows the convergence performance of Algorithm \ref{table5} with different initial points of $\mv{V}$. Specifically, two initial points of $\mv{V}$ are obtained by solving problem (\ref{eqn:maximum energy}) with $\alpha_k=1$ and $\alpha_k=1/\|\mv{h}_k\|^2\|\mv{g}_k\|^2$, $\forall k$, respectively. It is observed that Algorithm \ref{table5} does converge to the optimal solution in only $4$-$5$ iterations for both initial points. It is also observed that the initial point of $\mv{V}$ obtained by setting $\alpha_k=1/\|\mv{h}_k\|^2\|\mv{g}_k\|^2$, $\forall k$, in problem (\ref{eqn:maximum energy}) is better than that obtained by setting $\alpha_k=1$, $\forall k$, to make Algorithm \ref{table5} converge faster. The reason is as follows. When we fix $\alpha_k=1$, $\forall k$, in problem (\ref{eqn:maximum energy}), in the DL the users more far away from the AP tend to be allocated with less energy, i.e., incurring the doubly near-far effect in the WPCN. However, by setting $\alpha_k=1/\|\mv{h}_k\|^2\|\mv{g}_k\|^2$, $\forall k$, the users with poorer channels are assigned with higher priority in the DL power transfer, and thus have more transmit power in the UL information transmission.

\begin{table}
\caption{$\bar{P}_k(\mv{V}^\ast,\tau=0.5)$ versus $p_k^\ast$} \label{table1}
\begin{center}
\begin{tabular}{c|c|c}
\hline User Index $k$ & $\bar{P}_k(\mv{V}^\ast,\tau=0.5)$ (mW) & $p_k^\ast$ (mW) \\
\hline\hline
$1$ & $0.4913$ & $0.0846$ \\
$2$ & $0.6869$ & $0.0987$ \\
$3$ & $0.3168$ & $0.2547$ \\
$4$ &  $0.6199$ & $0.6199$ \\
 \hline
\end{tabular}
\end{center}
\end{table}

Furthermore, to illustrate whether power control is needed in the UL information transmission, i.e., each user transmits at maximum power or not using the energy harvested from the DL power transfer, we show the values of $\bar{P}_k(\mv{V}^\ast,\tau=0.5)$ versus $p_k^\ast$, $\forall k$, in Table \ref{table1}, where $\mv{p}^\ast$ and $\mv{V}^\ast$ denote the optimal solution to problem (\ref{equ:common throughput}) with $\tau=0.5$. It is observed that the three users that are nearer to the AP, i.e., $U_1$, $U_2$ and $U_3$, should not transmit at maximum power, and thus in general given the optimal DL energy beams $\mv{V}^\ast$, UL power control is needed to maximize the minimum SINR of all users in problem (\ref{equ:common throughput}).

\begin{figure}
\centering
 \epsfxsize=0.8\linewidth
    \includegraphics[width=10cm]{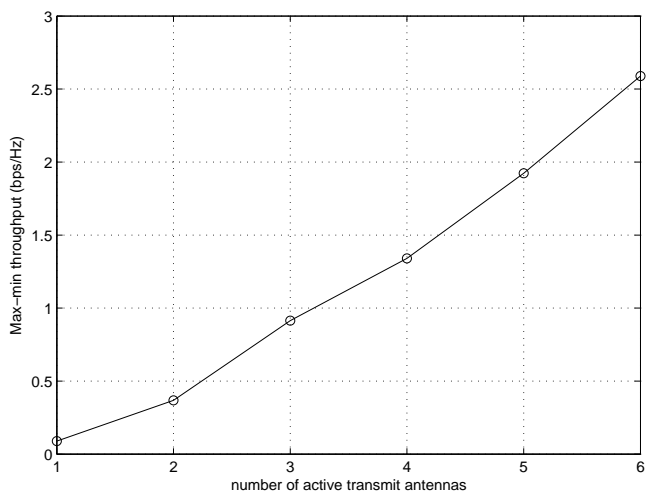}
\caption{Max-min throughput achieved by the optimal solution versus the number of active antennas at the AP.}
\label{fig7}
\end{figure}

Last, we study the impact of the number of antennas at the AP on the max-min throughput performance. In this example, we activate one more antenna among the $M=6$ antennas at each time. Fig. \ref{fig7} shows the max-min throughput achieved by the optimal solution in Section \ref{sec:Optimal Solution} versus the number of active antennas at the AP. Note that for the case when there is only one active antenna at the AP, since spatial transmit/receive beamforming cannot be utilized, we adopt the TDMA based solution proposed in \cite{RuiGlobecom} for the SISO WPCN. It is observed from Fig. \ref{fig7} that the max-min throughput increases significantly with the number of active antennas at the AP.

\subsection{Suboptimal Solution}

\begin{figure}
\centering
 \epsfxsize=0.8\linewidth
    \includegraphics[width=10cm]{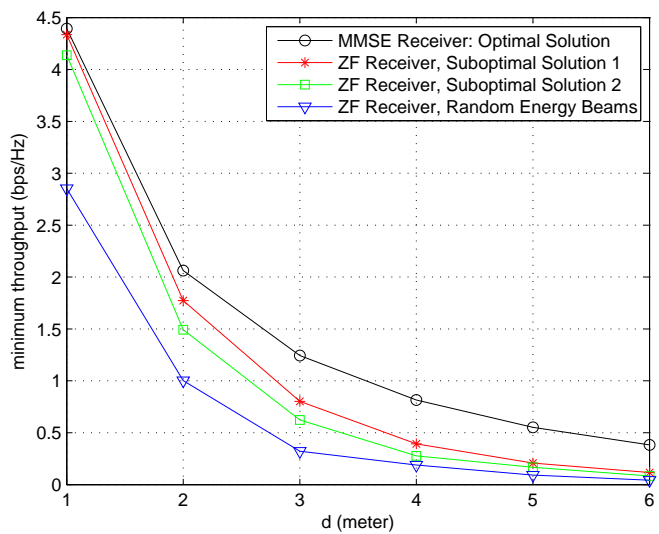}
\caption{Performance comparison between the optimal and suboptimal solutions.}
\label{fig5}
\end{figure}

In this subsection, we compare the max-min throughput by the optimal solution in Section \ref{sec:Optimal Solution} with MMSE receivers and the two suboptimal solutions in Section \ref{A ZF-Based Low-Complexity Suboptimal Solution} with ZF receivers. In this example, it is assumed that all users are of the same distance to the AP, i.e., $d_k=d$, $\forall k$. Fig. \ref{fig5} shows the max-min throughput over $d$. For the purpose of comparison, we also plot the max-min throughout achieved by solving problem (\ref{eqn:equivalent subooptimal zf}) where the energy beams $\mv{V}$ are randomly generated rather than obtained via solving problem (\ref{eqn:maximum energy}). It is observed that the throughput decays drastically as $d$ increases for all optimal and suboptimal solutions. It is also observed that for all values of $d$, the throughput by MMSE receiver outperforms those of the three suboptimal solutions by ZF receiver. However, when $d$ is small, it is observed that both Suboptimal Solutions 1 and 2 with ZF receiver achieve the throughput very close to the optimal solution with MMSE receiver. This is because in this case the available power for UL transmission is large for all $U_k$'s, and thus ZF receiver is asymptotically optimal with high signal-to-noise ratio (SNR). Furthermore, it is observed that with ZF receiver, Suboptimal Solution 2 performs very close to Suboptimal Solution 1, although it is based on separate optimizations of DL energy beamforming and UL power allocation to achieve lower complexity. However, if the energy beams are randomly generated instead of via solving problem (\ref{eqn:maximum energy}), there is a significant loss in the achieved max-min throughput observed with ZF receiver.

\section{Conclusion}\label{sec:Conclusion}
This paper has studied a wireless powered communication network (WPCN) with multi-antenna AP and single-antenna users. Under a harvest-then-transmit protocol, the minimum throughput among all users is maximized by a joint optimization of the DL-UL time allocation, DL energy beamforming, and UL transmit power allocation plus receive beamforming. We solve this problem optimally via a two-stage algorithm. First, we fix the DL-UL time allocation and propose an efficient algorithm to obtain the corresponding optimal DL energy beamforming and UL power allocation plus receive beamforming solution based on the techniques of alternating optimization and non-negative matrix theory. Then, the problem is solved by a one-dimension search over the optimal DL-UL time allocation. Furthermore, two suboptimal solutions of lower complexity are proposed with ZF based receive beamforming, and their performances are compared to the optimal solution.

\begin{appendix}

\subsection{Proof of Theorem \ref{theorem1}}\label{appendix1}

First, we have the following lemma.

\begin{lemma}\label{lemma5}
Given any receive beamforming
vectors $\mv{W}=\bar{\mv{W}}$ and energy beams $\mv{V}=\bar{\mv{V}}$, the
corresponding optimal power allocation $\bar{\mv{p}}$ and SINR balancing solution $\gamma(\bar{\mv{W}},\bar{\mv{V}})$ to problem (\ref{equ:common throughput}) must satisfy
the following two conditions:
\begin{itemize}
\item[1.] All $U_k$'s, $k=1,\cdots,K$, achieve the same SINR balancing value, i.e.,
\begin{align}\label{equ:equal sinr balancing level}
\gamma_k(\bar{\mv{p}},\bar{\mv{w}}_k)=\gamma(\bar{\mv{W}},\bar{\mv{V}}),
\ \forall k.
\end{align}
\item[2.] There exists at least an $U_{k^\ast}$ such that $\bar{p}_{k^\ast}=\bar{P}_{k^\ast}(\bar{\mv{V}},\bar{\tau})$.
\end{itemize}
\end{lemma}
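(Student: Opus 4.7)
The plan is to establish each of the two conditions by a short perturbation argument, showing that any violation would allow one to strictly improve the min-SINR objective of problem (\ref{equ:common throughput}) while keeping all users' power constraints feasible.

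For the first condition, I argue by contradiction. Suppose some $U_k$ attains an SINR strictly above the balanced value $\gamma(\bar{\mv{W}},\bar{\mv{V}})=\min_{j}\gamma_j(\bar{\mv{p}},\bar{\mv{w}}_j)$, and let $\mathcal{K}_{\min}$ be the set of users whose SINR equals this minimum, so that $k\notin\mathcal{K}_{\min}$. I then perturb $\bar{p}_k$ slightly downward. Inspecting (\ref{equ:SINR in SIMO-IC}) with $\mv{W}=\bar{\mv{W}}$ fixed, $\gamma_k$ is strictly increasing in $p_k$, whereas for every $j\neq k$ the quantity $\gamma_j(\mv{p},\bar{\mv{w}}_j)$ is strictly decreasing in $p_k$ since $p_k$ enters only the interference term in the denominator. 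By continuity, for a small enough perturbation $U_k$ stays strictly above the minimum while every user in $\mathcal{K}_{\min}$ experiences a strict increase in SINR, so the new min-SINR strictly exceeds $\gamma(\bar{\mv{W}},\bar{\mv{V}})$. Feasibility is preserved since $p_k$ is only decreased, which contradicts the optimality of $\bar{\mv{p}}$.

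For the second condition, suppose instead that $\bar{p}_k<\bar{P}_k(\bar{\mv{V}},\bar{\tau})$ for every $k$. Then for $\delta>0$ sufficiently small, the scaled allocation $(1+\delta)\bar{\mv{p}}$ still satisfies every individual power constraint. Substituting into (\ref{equ:SINR in SIMO-IC}) gives
\[
\gamma_k\bigl((1+\delta)\bar{\mv{p}},\bar{\mv{w}}_k\bigr)=\frac{(1+\delta)\bar{p}_k\|\bar{\mv{w}}_k^H\mv{h}_k\|^2}{(1+\delta)\sum_{j\neq k}\bar{p}_j\|\bar{\mv{w}}_k^H\mv{h}_j\|^2+\sigma^2\|\bar{\mv{w}}_k\|^2}>\gamma_k(\bar{\mv{p}},\bar{\mv{w}}_k),
\]
for every $k$, because only the signal and interference terms scale with $\delta$ while the noise term does not. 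Taking the minimum over $k$ shows that the min-SINR strictly increases under the scaled allocation, once more contradicting optimality.

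The main obstacle I anticipate is making the first part fully rigorous, since the identity of the minimum-achieving set $\mathcal{K}_{\min}$ could in principle change under perturbation. I would handle this by choosing the perturbation size small enough that, by the continuity of each $\gamma_j$ in $\mv{p}$, all users that were strictly above $\gamma(\bar{\mv{W}},\bar{\mv{V}})$ before the perturbation remain strictly above after it; only users already in $\mathcal{K}_{\min}$ can then determine the new minimum, and their SINRs have all strictly increased. Everything else in the argument is direct algebraic manipulation of the SINR expression in (\ref{equ:SINR in SIMO-IC}).
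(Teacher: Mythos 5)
Your proof is correct and takes essentially the same approach as the paper's: condition 1 by slightly decreasing the power of a user whose SINR exceeds the minimum so that every other user's SINR strictly rises (your continuity argument over the set $\mathcal{K}_{\min}$ merely makes explicit a step the paper glosses), and condition 2 by uniformly scaling all powers upward, your factor $(1+\delta)$ playing the role of the paper's $\alpha=\min_{1\leq k \leq K}\bar{P}_k(\bar{\mv{V}},\bar{\tau})/\bar{p}_k>1$. Note that both your argument and the paper's implicitly assume the cross gains $\|\bar{\mv{w}}_j^H\mv{h}_k\|^2$ are nonzero so that reducing one user's power \emph{strictly} increases the others' SINRs, so you are exactly on par with the published proof.
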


\begin{proof}
First, we assume that with $\bar{\mv{p}}$, there exists an $U_{\bar{k}}$ such that $\gamma_{\bar{k}}(\bar{\mv{p}},\bar{\mv{w}}_{\bar{k}})>\gamma(\bar{\mv{W}},\bar{\mv{V}})$. Then, we can decrease the transmit power of $U_{\bar{k}}$ and at the same time keep the transmit power of all other $U_k$'s, $\forall k \neq \bar{k}$, unchanged such that $U_{\bar{k}}$'s SINR is reduced but still larger than $\gamma(\bar{\mv{W}},\bar{\mv{V}})$. Note that this will increase each of other $U_k$'s SINR, $\forall k \neq  \bar{k}$, to be larger than $\gamma(\bar{\mv{W}},\bar{\mv{V}})$, since the interference power from $U_{\bar{k}}$ is reduced. As a result, the minimum SINR of $U_k$'s must be larger than $\gamma(\bar{\mv{W}},\bar{\mv{V}})$ with the new constructed power allocation, which contradicts to the fact that $\bar{\mv{p}}$ is the optimal power solution to problem (\ref{equ:common throughput}). The first part of Lemma \ref{lemma5} is thus proved.

Next, we assume that with $\bar{\mv{p}}$, all the individual power constraints are not tight in (\ref{equ:common throughput}), i.e., $\bar{p}_{k}<\bar{P}_{k}(\bar{\mv{V}},\bar{\tau})$, $\forall k$. In this case, define $\alpha=\min_{1\leq k \leq K}\bar{P}_{k}(\bar{\mv{V}},\bar{\tau})/\bar{p}_{k}>1$. Then, consider the new power solution $\hat{\mv{p}}=\alpha \bar{\mv{p}}$, which satisfies all the individual power constraints in problem (\ref{equ:common throughput}). Since $\gamma_k(\beta\bar{\mv{p}},\bar{\mv{w}}_k)>\gamma_k(\bar{\mv{p}},\bar{\mv{w}}_k)$ holds $\forall \beta>1$, $\forall k$, the minimum SINR of all $U_k$'s must be increased with the new constructed power solution $\hat{\mv{p}}$, which contradicts to the fact that $\bar{\mv{p}}$ is the optimal power solution to problem (\ref{equ:common throughput}). The second part of Lemma \ref{lemma5} is thus proved.
\end{proof}

We can express (\ref{equ:equal sinr balancing
level}) for all $k$'s in the following matrix
form:\begin{align}\label{eqn1}\bar{\mv{p}}\frac{1}{\gamma(\bar{\mv{W}},\bar{\mv{V}})}=\mv{D}(\bar{\mv{W}}) \mv
\Psi(\bar{\mv{W}})\bar{\mv{p}}+\mv{D}(\bar{\mv{W}})\mv \sigma(\bar{\mv{W}}).\end{align}Therefore, given any $\mv{W}=\bar{\mv{W}}$ and $\mv{V}=\bar{\mv{V}}$, the optimal power allocation $\bar{\mv{p}}$ and SINR balancing solution $\gamma(\bar{\mv{W}},\bar{\mv{V}})$ to problem (\ref{equ:common throughput}) must satisfy
\begin{numcases}{}
(\ref{eqn1}), \nonumber \\
\bar{p}_k=\bar{P}_k(\bar{\mv{V}},\bar{\tau}), ~ k=k^\ast, \label{eqn2}\\
\bar{p}_k\leq \bar{P}_k(\bar{\mv{V}},\bar{\tau}), ~ \forall k\neq k^\ast. \label{eqn3}
\end{numcases}

The following lemma reveals one important property for the equations given in (\ref{eqn1}) and (\ref{eqn2}).
\begin{lemma}\label{lemma3}Given
any fixed $\bar{\mv{W}}$ and $\bar{\mv{V}}$, there exists a unique solution
$(\bar{\mv{p}},\gamma(\bar{\mv{W}},\bar{\mv{V}}))$ to the equations in (\ref{eqn1}) and (\ref{eqn2}).\end{lemma}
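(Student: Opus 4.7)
The plan is to combine the two relations \eqref{eqn1} and \eqref{eqn2} into a single augmented eigenvalue equation for the matrix $\mv{A}_{k^\ast}(\bar{\mv{W}},\bar{\mv{V}})$ defined in \eqref{eqn:matrix Ak}, and then invoke Perron-Frobenius-type results from non-negative matrix theory to extract uniqueness.

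First I would rewrite the $k^\ast$-th scalar component of \eqref{eqn1} using \eqref{eqn2}: substituting $\bar{p}_{k^\ast}=\bar{P}_{k^\ast}(\bar{\mv{V}},\bar{\tau})$ into the $k^\ast$-th row and dividing by $\bar{P}_{k^\ast}(\bar{\mv{V}},\bar{\tau})>0$ yields
\begin{align*}
\frac{1}{\gamma(\bar{\mv{W}},\bar{\mv{V}})} = \frac{\mv{e}_{k^\ast}^T\mv{D}(\bar{\mv{W}})\mv\Psi(\bar{\mv{W}})}{\bar{P}_{k^\ast}(\bar{\mv{V}},\bar{\tau})}\,\bar{\mv{p}} + \frac{\mv{e}_{k^\ast}^T\mv{D}(\bar{\mv{W}})\mv\sigma(\bar{\mv{W}})}{\bar{P}_{k^\ast}(\bar{\mv{V}},\bar{\tau})}.
\end{align*}
Stacking this single scalar equation under the original $K$ equations in \eqref{eqn1} produces the augmented system
\begin{align*}
\frac{1}{\gamma(\bar{\mv{W}},\bar{\mv{V}})}\begin{pmatrix}\bar{\mv{p}} \\ 1\end{pmatrix} = \mv{A}_{k^\ast}(\bar{\mv{W}},\bar{\mv{V}})\begin{pmatrix}\bar{\mv{p}} \\ 1\end{pmatrix},
\end{align*}
which shows that $1/\gamma(\bar{\mv{W}},\bar{\mv{V}})$ must be an eigenvalue of $\mv{A}_{k^\ast}(\bar{\mv{W}},\bar{\mv{V}})$ with corresponding eigenvector having strictly positive components.

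Next I would invoke Perron-Frobenius. The matrix $\mv{A}_{k^\ast}(\bar{\mv{W}},\bar{\mv{V}})$ is entrywise non-negative, and its last column (coming from $\mv{D}(\bar{\mv{W}})\mv\sigma(\bar{\mv{W}})$ and $\mv{e}_{k^\ast}^T\mv{D}(\bar{\mv{W}})\mv\sigma(\bar{\mv{W}})/\bar{P}_{k^\ast}(\bar{\mv{V}},\bar{\tau})$) is strictly positive because $\sigma^2>0$. A standard fact from non-negative matrix theory (see \cite{Horn85}, \cite{Seneta81}) states that only the spectral radius of a non-negative matrix can admit a strictly positive eigenvector, and this eigenvector is unique up to a positive scalar. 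Since the augmented vector $(\bar{\mv{p}}^T,1)^T$ is strictly positive (noise guarantees $\bar{\mv{p}}>\mv 0$), it must be the Perron vector, and therefore $1/\gamma(\bar{\mv{W}},\bar{\mv{V}})=\rho(\mv{A}_{k^\ast}(\bar{\mv{W}},\bar{\mv{V}}))$, which is uniquely determined. The normalization of the last component to $1$ then pins down the scale of the eigenvector, yielding a unique $\bar{\mv{p}}$.

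The main obstacle I anticipate is cleanly justifying two facts simultaneously: (i) that any feasible power solution to \eqref{eqn1}-\eqref{eqn2} must be strictly positive (so that the Perron-vector characterization applies rather than some other non-negative eigenvector associated with a smaller eigenvalue), and (ii) that the relevant non-negative matrix result indeed gives uniqueness up to scaling even without explicit irreducibility — here the strictly positive last column of $\mv{A}_{k^\ast}(\bar{\mv{W}},\bar{\mv{V}})$ should be enough to rule out any competing positive eigenvector, but this needs to be stated carefully by citing the precise theorem from \cite{Horn85} or \cite{Seneta81}. Once these two points are handled, the uniqueness claim follows immediately from the eigenvector uniqueness of the Perron-Frobenius theorem.
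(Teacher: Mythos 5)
Your route is genuinely different from the paper's. The paper proves Lemma \ref{lemma3} by an elementary two-solution contradiction argument: it supposes two distinct solutions, forms the ratios $\theta_k=\bar{p}_k'/\bar{p}_k$, orders them, uses (\ref{eqn2}) to force $\theta_{k^\ast}=1$, and then exploits the monotonicity of the SINR under the rescaled interference to conclude $\gamma'>\gamma$ from the largest ratio and $\gamma'<\gamma$ from the smallest --- an extension of Theorem 1 of \cite{Xu98} to individual power constraints, using no matrix theory at all. Only \emph{after} Lemma \ref{lemma3} does the paper invoke Perron--Frobenius to identify $1/\gamma(\bar{\mv{W}},\bar{\mv{V}})$ with $\rho(\mv{A}_{k^\ast})$. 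You invert this order: you prove the spectral-radius identification first and try to pull uniqueness out of the eigenvector. Your first step is sound and in fact coincides with the paper's own equation (\ref{equ:matrix form2}); moreover, the identification $1/\gamma=\rho(\mv{A}_{k^\ast})$ for a strictly positive eigenvector is valid for \emph{any} non-negative matrix, via the bounds $\min_i (\mv{B}\mv{x})_i/x_i \leq \rho(\mv{B}) \leq \max_i (\mv{B}\mv{x})_i/x_i$ for $\mv{x}>\mv{0}$, with no irreducibility needed.

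The gap sits exactly at the point you flagged but did not close. The ``standard fact'' that a strictly positive eigenvector of a non-negative matrix is unique up to positive scaling is \emph{false} in general ($\mv{B}=\mv{I}$ is a counterexample); it is a theorem only under irreducibility. But $\mv{A}_{k^\ast}(\bar{\mv{W}},\bar{\mv{V}})$ need not be irreducible here: its top-left block is $\mv{D}(\bar{\mv{W}})\mv{\Psi}(\bar{\mv{W}})$, and the off-diagonal entries $\|\bar{\mv{w}}_k^H\mv{h}_j\|^2$ of $\mv{\Psi}$ can vanish --- for ZF-type receivers $\mv{\Psi}=\mv{0}$ identically. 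So neither Seneta's Theorem 1.6 (quoted as Lemma \ref{lemma4}, which assumes irreducibility) nor the Perron--Frobenius uniqueness theorem in \cite{Horn85} can simply be cited, and to my knowledge no off-the-shelf result in \cite{Horn85} or \cite{Seneta81} covers the reducible-with-positive-column case. Your instinct that the strictly positive last column suffices is correct, but it needs a bespoke argument, e.g.: given two strictly positive eigenvectors $\mv{x},\mv{y}$ (both necessarily for $\rho$, with $\rho>0$), set $t=\min_i x_i/y_i$ and $\mv{z}=\mv{x}-t\mv{y}\geq\mv{0}$, which has a zero component; if $\mv{z}\neq\mv{0}$, then $z_{K+1}>0$ would give $\rho z_i\geq b_{i,K+1}z_{K+1}>0$ for all $i$, a contradiction, so $z_{K+1}=0$; letting $S\subseteq\{1,\dots,K\}$ be the support of $\mv{z}$, the eigen-equation restricted to $S$ gives $\rho\mv{z}_S=\mv{B}_{S,S}\mv{z}_S$ with $\mv{z}_S>\mv{0}$, hence $\rho(\mv{B}_{S,S})=\rho$, while the rows of $\mv{B}\mv{y}=\rho\mv{y}$ in $S$ give $\rho\mv{y}_S\geq \mv{B}_{S,S}\mv{y}_S+\mv{w}$ with $w_i\geq b_{i,K+1}y_{K+1}>0$, forcing $\rho(\mv{B}_{S,S})<\rho$ --- a contradiction, so $\mv{z}=\mv{0}$ and the normalization $x_{K+1}=y_{K+1}=1$ gives $\mv{x}=\mv{y}$. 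Until something along these lines is written out, your proof is incomplete; the paper's ratio argument, by contrast, handles the reducible case for free at the cost of being less structural (your approach, once repaired, delivers Theorem \ref{theorem1}'s spectral-radius characterization as a byproduct).
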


\begin{proof}
Note that if the sum-power constraint of all users is considered instead, a similar result to Lemma \ref{lemma3} has been shown in Theorem 1 of \cite{Xu98}. In the following, we extend this result to the case with users' individual power constraints. Suppose that there exist two different solutions to equations (\ref{eqn1}) and (\ref{eqn2}),
denoted by $(\bar{\mv{p}},\gamma(\bar{\mv{W}},\bar{\mv{V}}))$ and
$(\bar{\mv{p}}',\gamma'(\bar{\mv{W}},\bar{\mv{V}}))$, respectively. Define a sequence of
$\theta_k$'s as $\theta_k=\frac{\bar{p}_k'}{\bar{p}_k}$, $\forall
k$. We can without loss of generality re-arrange $\theta_k$'s in a decreasing order
by\begin{align}\label{eqn:sequence}\theta_{t_1}\geq \theta_{t_2} \geq
\cdots \geq \theta_{t_K}.\end{align}Since according to
(\ref{eqn2}) we have $\bar{p}_{k^\ast}=\bar{p}_{k^\ast}'=P_{k^\ast}^{{\rm
max}}$, it follows that $\theta_{k^\ast}=1$ must hold. Hence,
$\theta_{t_1}\geq \theta_{k^\ast}=1$. Moreover, in (\ref{eqn:sequence}), at
least one inequality must hold with a strict inequality sign because otherwise
$\theta_k=1$, $\forall k$, which then implies that only one unique
solution to equations (\ref{eqn1}) and (\ref{eqn2}) exists. Next, we derive the SINR balancing value of $U_{t_1}$ as
follows:\begin{align}\gamma_{t_1}'(\bar{\mv{p}}',\bar{\mv{w}}_{t_1}')&=\frac{\bar{p}_{t_1}'\|\bar{\mv{w}}_{t_1}^H\mv{
h}_{t_1}\|^2}{\bar{\mv{w}}_{t_1}^H\left(\sum\limits_{j\neq
t_1}\bar{p}_j'\mv{h}_{j}\mv{
h}_{j}^H+\sigma^2\mv{I}\right)\bar{\mv{w}}_{t_1}}
\nonumber \\ &=\frac{\bar{p}_{t_1}\|\bar{\mv{w}}_{t_1}^H\mv{
h}_{t_1}\|^2}{\bar{\mv{w}}_{t_1}^H\left(\sum\limits_{j\neq
t_1}\bar{p}_j\mv{h}_{j}\mv{
h}_{j}^H\frac{\theta_j}{\theta_{t_1}}+\sigma^2\mv{I}\frac{1}{\theta_{t_1}}\right)\bar{\mv{w}}_{t_1}}
\nonumber \\ &>\frac{\bar{p}_{t_1}\|\bar{\mv{w}}_{t_1}^H\mv{
h}_{t_1}\|^2}{\bar{\mv{w}}_{t_1}^H\left(\sum\limits_{j\neq
t_1}\bar{p}_j\mv{h}_{j}\mv{
h}_{j}^H+\sigma^2\mv{I}\right)\bar{\mv{w}}_{t_1}}
\nonumber \\ & =\gamma_{t_1}(\bar{\mv{p}},\bar{\mv{w}}_{t_1}).\end{align}Based on
(\ref{equ:equal sinr balancing level}), we have
\begin{align}\label{eqn:greater}\gamma'(\bar{\mv{W}},\bar{\mv{V}})=\gamma'_{t_1}(\bar{\mv{p}}',\bar{\mv{w}}_{t_1}')>\gamma_{t_1}(\bar{\mv{p}},\bar{\mv{w}}_{t_1})=\gamma(\bar{\mv{W}},\bar{\mv{V}}).\end{align}Similarly, we can show that $\gamma'_{t_K}(\bar{\mv{p}}',\bar{\mv{w}}_{t_K}')<\gamma_{t_K}(\bar{\mv{p}},\bar{\mv{w}}_{t_K})$, which
yields\begin{align}\label{eqn:less}\gamma'(\bar{\mv{W}},\bar{\mv{V}})=\gamma'_{t_K}(\bar{\mv{p}}',\bar{\mv{w}}_{t_K}')<\gamma_{t_K}(\bar{\mv{p}},\bar{\mv{w}}_{t_K})=\gamma(\bar{\mv{W}},\bar{\mv{V}}).\end{align}Since
(\ref{eqn:greater}) and (\ref{eqn:less}) contradict to each other,
there must exist one unique solution to equations (\ref{eqn1}) and (\ref{eqn2}). Lemma
\ref{lemma3} is thus proved.
\end{proof}

According to Lemma \ref{lemma3}, there exists a unique solution $(\bar{\mv{p}},\gamma(\bar{\mv{W}},\bar{\mv{V}}))$ to equations (\ref{eqn1}) and (\ref{eqn2}); hence, this solution must be the unique solution that can satisfy (\ref{eqn1}), (\ref{eqn2}), and (\ref{eqn3}) simultaneously, and thus is optimal to problem (\ref{equ:common throughput}). This indicates that given any $\bar{\mv{W}}$ and $\bar{\mv{V}}$, to find the corresponding optimal power and SINR balancing solution to problem (\ref{equ:common throughput}), it is sufficient to study the unique solution to equations (\ref{eqn1}) and (\ref{eqn2}).

Next, we further investigate the properties of equations (\ref{eqn1}) and (\ref{eqn2}). By multiplying both sides of (\ref{eqn1}) by
$\mv{e}_{k^\ast}^T$, we have
\begin{align}\label{equ:power constraint}\frac{\mv{e}_{k^\ast}^T\bar{\mv{p}}}{\gamma(\bar{\mv{W}},\bar{\mv{V}})}=\frac{\bar{P}_{k^\ast}(\bar{\mv{V}},\bar{\tau})}{\gamma(\bar{\mv{W}},\bar{\mv{V}})}=\mv{e}_{k^\ast}^T\mv{D}(\bar{\mv{W}}) \mv \Psi(\bar{\mv{W}})\bar{\mv{p}}+\mv{e}_{k^\ast}^T\mv{D}(\bar{\mv{W}})\mv \sigma(\bar{\mv{W}}).\end{align}
Therefore, by combining (\ref{eqn1}) and (\ref{equ:power
constraint}), it follows that
\begin{align}\label{equ:matrix form2}\frac{1}{\gamma(\bar{\mv{W}},\bar{\mv{V}})}\bar{\mv{p}}_{{\rm ext}}=
\mv{A}_{k^\ast}(\bar{\mv{W}},\bar{\mv{V}})\bar{\mv{p}}_{{\rm ext}},\end{align}where $\bar{\mv{p}}_{{\rm ext}}=\left(\begin{array}{c}\bar{\mv{p}} \\
1\end{array}\right)$ and $\mv{A}_{k^\ast}(\bar{\mv{W}},\bar{\mv{V}})$ is given in (\ref{eqn:matrix Ak}) with $k=k^\ast$.

According to Perron-Frobenius theory \cite{Horn85}, for any
nonnegative matrix, there is at least one positive eigenvalue and
the spectral radius of the matrix is equal to the largest positive
eigenvalue. Furthermore, according to Lemma \ref{lemma3}, there is only one strictly positive eigenvalue to matrix
$\mv{A}_{k^\ast}(\bar{\mv{W}},\bar{\mv{V}})$. Accordingly, it follows from
(\ref{equ:matrix form2}) that given $\bar{\mv{W}}$ and $\bar{\mv{V}}$, the inverse of
the optimal SINR balancing solution $1/\gamma(\bar{\mv{W}},\bar{\mv{V}})$ is the spectral
radius of $\mv{A}_{k^\ast}(\bar{\mv{W}},\bar{\mv{V}})$. In other words, we have
\begin{align}\label{eqn:spectral radius1}
\gamma(\bar{\mv{W}},\bar{\mv{V}})=\frac{1}{\rho(\mv{A}_{k^\ast}(\bar{\mv{W}},\bar{\mv{V}}))}.
\end{align}

Given $\bar{\mv{W}}$ and $\bar{\mv{V}}$, (\ref{eqn:spectral radius1}) relates the optimal SINR balancing solution of problem (\ref{equ:common throughput}) to the spectral radius of the matrix $\mv{A}_{k^\ast}(\bar{\mv{W}},\bar{\mv{V}})$. Finally, we find $k^\ast$ as follows. Note that the optimal power allocation $\bar{\mv{p}}$ and SINR balancing solution $\gamma(\bar{\mv{W}},\bar{\mv{V}})$ to problem (\ref{equ:common throughput}) satisfy (\ref{eqn1}), (\ref{eqn2}), and (\ref{eqn3}). We express the above conditions into $K$ sets of conditions, with the $k$th set of conditions given by
\begin{align}\label{eqn:user k}
\left\{\begin{array}{l}(\ref{eqn1}), \\ \bar{p}_k\leq P_k(\bar{\mv{V}},\bar{\tau}).\end{array}\right.
\end{align}By multiplying both sides of (\ref{eqn1}) by $\mv{e}_k^T$, the power constraint for $U_k$ can be further expressed as
\begin{align}
\frac{\Bar{P}_k(\bar{\mv{V}},\bar{\tau})}{\gamma(\bar{\mv{W}},\bar{\mv{V}})}\geq \frac{\mv{e}_k^T\bar{\mv{p}}}{\gamma(\bar{\mv{W}},\bar{\mv{V}})}=\mv{e}_k^T\mv{D}(\bar{\mv{W}}) \mv \Psi(\bar{\mv{W}})\bar{\mv{p}}+\mv{e}_k^T\mv{D}(\bar{\mv{W}})\mv \sigma(\bar{\mv{W}}).
\end{align}Therefore, (\ref{eqn:user k}) can be equivalently expressed in the matrix form as
\begin{align}\label{eqn:matrix k}
\frac{1}{\gamma(\bar{\mv{W}},\bar{\mv{V}})}\bar{\mv{p}}_{{\rm ext}}\geq
\mv{A}_k(\bar{\mv{W}},\bar{\mv{V}})\bar{\mv{p}}_{{\rm ext}}.
\end{align}Note that (\ref{eqn:matrix k}) holds regardless of $k$.

\begin{lemma}\label{lemma4}{\cite[Theorem 1.6]{Seneta81}}\label{lem:non-negative matrix}
Let $\mv{B}$ be a non-negative irreducible matrix, $\lambda$ a
positive number, and $\mv{x}\geq 0$, $\neq 0$, a vector satisfying
$$\lambda\mv{x}\geq \mv{B}\mv{x}$$ then
$\rho(\mv{B})\leq \lambda$. Moreover, $\rho(\mv{B})=
\lambda$ if and only if $\lambda\mv{x}=\mv{B}\mv{x}$, and in this
case $\mv{x}$ is the dominant eigenvector of $\mv{B}$.
\end{lemma}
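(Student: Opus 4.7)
The plan is to reduce both claims to scalar inequalities by pairing $\mv{x}$ with a strictly positive left Perron eigenvector of $\mv{B}$. Specifically, the classical Perron--Frobenius theorem for an irreducible non-negative matrix $\mv{B}$ guarantees the existence of a vector $\mv{y}>\mv{0}$ (componentwise strictly positive) satisfying $\mv{y}^T\mv{B}=\rho(\mv{B})\mv{y}^T$, and moreover the eigenspace of $\mv{B}$ corresponding to $\rho(\mv{B})$ is one-dimensional and is spanned by a strictly positive right eigenvector. I would use $\mv{y}$ as a ``test vector'' to convert the componentwise inequality $\lambda\mv{x}\geq \mv{B}\mv{x}$ into a scalar statement involving $\rho(\mv{B})$ and $\lambda$.

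For the first claim, left-multiply $\lambda\mv{x}\geq \mv{B}\mv{x}$ by $\mv{y}^T$. Since $\mv{y}>\mv{0}$, this preserves the inequality and yields $\lambda(\mv{y}^T\mv{x})\geq \mv{y}^T\mv{B}\mv{x}=\rho(\mv{B})(\mv{y}^T\mv{x})$. Because $\mv{x}\geq \mv{0}$ with $\mv{x}\neq \mv{0}$ and $\mv{y}$ is strictly positive, the scalar $\mv{y}^T\mv{x}$ is strictly positive, so dividing through gives $\rho(\mv{B})\leq \lambda$.

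For the equality part, suppose $\rho(\mv{B})=\lambda$ and set $\mv{z}=\lambda\mv{x}-\mv{B}\mv{x}\geq \mv{0}$. Left-multiplying by $\mv{y}^T$ gives $\mv{y}^T\mv{z}=\lambda(\mv{y}^T\mv{x})-\rho(\mv{B})(\mv{y}^T\mv{x})=0$. Since $\mv{y}$ is strictly positive and $\mv{z}$ is componentwise non-negative, this forces $\mv{z}=\mv{0}$, i.e.\ $\lambda\mv{x}=\mv{B}\mv{x}$. The converse direction ($\lambda\mv{x}=\mv{B}\mv{x}$ implies $\rho(\mv{B})=\lambda$), together with the dominant-eigenvector assertion, is handled by invoking the uniqueness clause of Perron--Frobenius: a non-negative nonzero eigenvector of an irreducible non-negative matrix must correspond to the eigenvalue $\rho(\mv{B})$ and must lie in the one-dimensional Perron eigenspace; hence such an $\mv{x}$ is a positive multiple of the dominant eigenvector of $\mv{B}$.

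The bulk of the difficulty is essentially delegated to the Perron--Frobenius theorem itself: the existence of a strictly positive left eigenvector for $\rho(\mv{B})$ and the one-dimensionality of the Perron eigenspace are nontrivial facts about irreducible non-negative matrices (they are what irreducibility, as opposed to mere non-negativity, buys us). Once those are in hand, the proof reduces to the short bookkeeping exercise above, whose only quantitative ingredient is the observation that the pairing of a non-negative vector against a strictly positive one vanishes if and only if the non-negative vector is identically zero.
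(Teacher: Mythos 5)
Your proof is correct. Note that the paper offers no proof of this lemma at all---it is quoted verbatim from \cite[Theorem 1.6]{Seneta81} (the subinvariance theorem)---and your argument, pairing the inequality $\lambda\mv{x}\geq \mv{B}\mv{x}$ against a strictly positive left Perron eigenvector $\mv{y}$ and using that $\mv{y}^T\mv{x}>0$, is precisely the classical proof of that cited result, with irreducibility correctly identified as what guarantees $\mv{y}>\mv{0}$. One minor streamlining: the converse direction does not require invoking the uniqueness clause of Perron--Frobenius to identify the eigenvalue, since pairing $\lambda\mv{x}=\mv{B}\mv{x}$ with $\mv{y}$ already forces $\lambda=\rho(\mv{B})$; the one-dimensionality of the Perron eigenspace is needed only for the final assertion that $\mv{x}$ is (a positive multiple of) the dominant eigenvector.
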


According to Lemma \ref{lemma4}, it follows from (\ref{eqn:matrix k}) that
\begin{align}\label{eqn:K inequality}
\frac{1}{\gamma(\bar{\mv{W}},\bar{\mv{V}})}\geq \rho(\mv{A}_k(\bar{\mv{W}},\bar{\mv{V}})), ~~~ \forall k.
\end{align}(\ref{eqn:K inequality}) implies that $\frac{1}{\gamma(\bar{\mv{W}},\bar{\mv{V}})}\geq \max\limits_{1\leq k \leq K}\rho(\mv{A}_k(\bar{\mv{W}},\bar{\mv{V}}))$. According to (\ref{eqn:spectral radius1}), we have
\begin{align}\label{eqn:active power constraint}
k^\ast=\arg \max\limits_{1\leq k \leq K} \rho(\mv{A}_k(\bar{\mv{W}},\bar{\mv{V}})).
\end{align}

By combining (\ref{eqn:spectral radius1}) and (\ref{eqn:active power constraint}), (\ref{eqn:optimal SINR balancing value}) is proved. Moreover, according to (\ref{equ:matrix form2}), if $\bar{\mv{p}}_{{\rm ext}}=\left(\begin{array}{c}\bar{\mv{p}} \\
1\end{array}\right)$ is the dominant eigenvector of $\mv{A}_{k^\ast}(\bar{\mv{W}},\bar{\mv{V}})$, then $\bar{\mv{p}}$ is the optimal power solution to problem (\ref{equ:common throughput}) given $\bar{\mv{W}}$ and $\bar{\mv{V}}$. Theorem \ref{theorem1} is thus proved.

\subsection{Proof of Proposition \ref{proposition1}}\label{appendix2}

Consider the following problem:
\begin{align}\label{eqn:downlink equivalent expression}\mathop{\mathtt{Minimize}}_{\mv{V},\mv{q},\theta} & ~~ \theta \nonumber \\
\mathtt{Subject \ to} & ~~ \mv{A}_k(\bar{\mv{W}},\mv{V})\mv{q}\leq \theta \mv{q}, ~ \forall k, \nonumber \\ & ~~ \mv{q}>\mv{0}, \nonumber \\ & ~~ \sum\limits_{i=1}^l\|\mv{v}_i\|^2\leq P_{{\rm sum}},\end{align}where $\mv{q}=[q_1,\cdots,q_{K+1}]^T$. According to Lemma \ref{lemma4}, the first set of $K$ constraints and $\mv{q}>\mv{0}$ indicate that any feasible solution $(\mv{V},\mv{q},\theta)$ to problem (\ref{eqn:downlink equivalent expression}) satisfies $\theta\geq \rho(\mv{A}_k(\bar{\mv{W}},\mv{V}))$, $\forall k$. In other words, the minimum $\theta$ equals to $\max\limits_{1\leq k \leq K}\rho(\mv{A}_k(\bar{\mv{W}},\mv{V}))$. As a result, problem (\ref{eqn:downlink equivalent expression}) is equivalent to problem (\ref{eqn:downlink}). It can be shown that problem (\ref{eqn:downlink equivalent expression}) can be further expressed in the following form:
\begin{align}\label{eqn:downlink equivalent GP}\mathop{\mathtt{Minimize}}_{\mv{S},\mv{q},\theta} & ~~ \theta \nonumber \\
\mathtt{Subject \ to} & ~~ \sum\limits_{j=1}^K[\mv{X}(\bar{\mv{W}})]_{i,j}\frac{q_j}{q_i \theta}+y_i(\bar{\mv{W}})\frac{q_{K+1}}{q_i \theta} \leq 1, ~ 1\leq i \leq K, \nonumber \\ & ~~ \sum\limits_{j=1}^K[\mv{e}_k^T\mv{X}(\bar{\mv{W}})]_j\frac{q_j}{q_{K+1} \theta}+\mv{e}_k^T\mv{y}(\bar{\mv{W}})\frac{1}{ \theta} \leq \frac{\epsilon \bar{\tau}{\rm Tr}(\mv{G}_k\mv{S})}{1-\bar{\tau}}, ~ 1\leq k \leq K, \nonumber \\ & ~~ {\rm Tr}(\mv{S}) \leq P_{{\rm sum}}, \nonumber \\ & ~~ \mv{S}\succeq \mv{0}. \end{align}

For any scalar $b>0$, let $\tilde{b}=\log b$. Moreover, define $\tilde{\mv{q}}=[\log q_1,\cdots,\log q_{K+1}]^T$, $\forall k$. Then, it can be shown that problem (\ref{eqn:downlink equivalent GP}) is equivalent to problem (\ref{eqn:downlink equivalent convex problem}). Proposition \ref{proposition1} is thus proved.

\subsection{Proof of Theorem \ref{theorem2}}\label{appendix3}
Let $(\tilde{\mv{W}},\tilde{\mv{V}})$ denote the solution obtained by Algorithm \ref{table5}. According to Algorithm \ref{table5}, $(\tilde{\mv{W}},\tilde{\mv{V}})$ satisfies: 1. Given $\mv{V}=\tilde{\mv{V}}$, $\tilde{\mv{W}}$ is the optimal solution to problem (\ref{eqn:uplink}); and 2. Given $\mv{W}=\tilde{\mv{W}}$, $\tilde{\mv{V}}$ is the optimal solution to problem (\ref{eqn:downlink}). Furthermore, define $\tilde{k}^\ast=\arg \max\limits_{1\leq k \leq K} \rho(\mv{A}_k(\tilde{\mv{W}},\tilde{\mv{V}}))$, and $\tilde{\mv{p}}_{{\rm ext}}=\left(\begin{array}{c}\tilde{\mv{p}} \\ 1 \end{array} \right)$ as the dominant eigenvector of the matrix $\mv{A}_{\tilde{k}^\ast}(\tilde{\mv{W}},\tilde{\mv{V}})$; then $\tilde{\mv{p}}=(\tilde{p}_1,\cdots,\tilde{p}_K)$ is the optimal power solution to problem (\ref{equ:common throughput}) given $\mv{W}=\tilde{\mv{W}}$ and $\mv{V}=\tilde{\mv{V}}$ according to Theorem \ref{theorem1}.

\begin{lemma}\label{lemma8}{\cite[Corollary 8.3.3]{Horn85}} For any non-negative irreducible $K$-dimension matrix $\mv{B}$, its spectral radius can be expressed as
\begin{align}
\rho(\mv{B})=\max\limits_{\mv{y}\geq \mv{0}, \mv{y} \neq \mv{0}} \min\limits_{1\leq j \leq K} \frac{\mv{e}_j^T\mv{B}\mv{y}}{\mv{e}_j^T\mv{y}}.
\end{align}
\end{lemma}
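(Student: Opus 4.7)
The plan is to establish the Collatz--Wielandt characterization by proving the two matching bounds
$$\rho(\mv{B}) \geq \max_{\mv{y}\geq \mv{0},\,\mv{y}\neq \mv{0}}\min_{1\leq j \leq K}\frac{\mv{e}_j^T\mv{B}\mv{y}}{\mv{e}_j^T\mv{y}} \quad \text{and} \quad \rho(\mv{B}) \leq \max_{\mv{y}\geq \mv{0},\,\mv{y}\neq \mv{0}}\min_{1\leq j \leq K}\frac{\mv{e}_j^T\mv{B}\mv{y}}{\mv{e}_j^T\mv{y}}.$$
The engine for both directions is the Perron--Frobenius theorem for non-negative irreducible matrices, which asserts that $\rho(\mv{B})$ is a simple positive eigenvalue possessing a strictly positive right eigenvector $\mv{v}>\mv{0}$ with $\mv{B}\mv{v}=\rho(\mv{B})\mv{v}$ and a strictly positive left eigenvector $\mv{u}>\mv{0}$ with $\mv{u}^T\mv{B}=\rho(\mv{B})\mv{u}^T$. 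I would invoke these two eigenvectors as the main ingredients of the argument.

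For the upper-bound direction, I would fix an arbitrary $\mv{y}\geq\mv{0}$ with $\mv{y}\neq\mv{0}$, set $\lambda(\mv{y})=\min_{1\leq j\leq K}\frac{\mv{e}_j^T\mv{B}\mv{y}}{\mv{e}_j^T\mv{y}}$, and note that by the very definition of the minimum one has the componentwise inequality $\mv{B}\mv{y}\geq \lambda(\mv{y})\mv{y}$. Multiplying on the left by the positive row vector $\mv{u}^T$ preserves the inequality and gives $\mv{u}^T\mv{B}\mv{y}\geq \lambda(\mv{y})\mv{u}^T\mv{y}$; substituting $\mv{u}^T\mv{B}=\rho(\mv{B})\mv{u}^T$ and dividing by the strictly positive scalar $\mv{u}^T\mv{y}$ (which is positive because $\mv{u}>\mv{0}$ and $\mv{y}\geq\mv{0}$, $\mv{y}\neq\mv{0}$) yields $\lambda(\mv{y})\leq \rho(\mv{B})$. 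Taking the supremum over admissible $\mv{y}$ delivers the upper bound.

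For the lower-bound direction, I would simply instantiate the inner expression at $\mv{y}=\mv{v}$, the right Perron eigenvector. Since $\mv{B}\mv{v}=\rho(\mv{B})\mv{v}$ and $\mv{v}>\mv{0}$, every ratio $\frac{\mv{e}_j^T\mv{B}\mv{v}}{\mv{e}_j^T\mv{v}}$ collapses to $\rho(\mv{B})$, so the inner minimum equals $\rho(\mv{B})$ and hence the outer maximum is at least $\rho(\mv{B})$. Combined with the previous step, this yields the claimed equality and, as a bonus, exhibits the Perron vector as a maximizer.

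The main technical subtlety to watch is the handling of boundary cases in which some component $\mv{e}_j^T\mv{y}$ vanishes, since the quotient is then ill-defined. This is resolved either by restricting the supremum to strictly positive $\mv{y}$ and using density together with continuity to recover the full statement, or by adopting the convention that the ratio equals $+\infty$ at such indices (which leaves the inner minimum unaffected). Apart from this bookkeeping, the proof is a direct corollary of Perron--Frobenius with no additional machinery required; indeed, the paper treats the statement as a black-box invocation of \cite[Corollary~8.3.3]{Horn85} and uses it downstream purely as a variational tool.
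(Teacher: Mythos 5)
Your proof is correct, but there is nothing in the paper to compare it against: the paper does not prove this lemma at all, treating it as a black-box citation of \cite[Corollary 8.3.3]{Horn85} (the Collatz--Wielandt characterization) and using it downstream only in the proof of Theorem \ref{theorem2}. Your argument is the standard textbook derivation---essentially the one in Horn and Johnson---and all the steps check out: the upper bound follows by pairing the componentwise inequality $\mv{B}\mv{y}\geq \lambda(\mv{y})\mv{y}$ with the strictly positive \emph{left} Perron eigenvector $\mv{u}$ (strict positivity, guaranteed here precisely by irreducibility, is what makes $\mv{u}^T\mv{y}>0$ for every admissible $\mv{y}$, so the division is legitimate), and the lower bound follows by evaluating at the strictly positive \emph{right} Perron eigenvector $\mv{v}$, where all $K$ ratios collapse to $\rho(\mv{B})$. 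Two fine points you handle or should make explicit: first, the statement as printed in the paper is slightly sloppy, since $\mv{e}_j^T\mv{y}$ may vanish; the source statement in Horn--Johnson restricts the inner minimum to indices $j$ with $\mv{e}_j^T\mv{y}>0$, and your $+\infty$ convention reproduces exactly that restriction (and $\lambda(\mv{y})$ stays finite because $\mv{y}\neq\mv{0}$ forces at least one positive component). Second, your evaluation at $\mv{y}=\mv{v}$ shows the supremum is attained, which justifies writing $\max$ rather than $\sup$ in the formula---without irreducibility this attainment can fail even though the sup-characterization of $\rho(\mv{B})$ survives for general non-negative matrices. In short: a correct, self-contained proof of a result the paper merely imports.
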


Let $\rho^\ast$ denote the optimal value of problem (\ref{eqn:equivalent problem}). According to Lemma \ref{lemma8}, it follows that
\begin{align}\label{eqn:a1}
\rho^\ast=\min\limits_{\mv{W}}\min\limits_{\mv{V}\in \mathcal{V}}\max\limits_{1\leq k \leq K}\max\limits_{\mv{y}_k\geq \mv{0}, \mv{y}_k \neq \mv{0}} \min\limits_{1\leq j_k \leq K+1} \frac{\mv{e}_{j_k}^T\mv{A}_k(\mv{W},\mv{V})\mv{y}_k}{\mv{e}_{j_k}^T\mv{y}_k},
\end{align}where $\mathcal{V}=\{\mv{V}|\sum\limits_{i=1}^l\|\mv{v}_i\|^2\leq P_{{\rm sum}}\}$.

First, we assume that $\mv{y}_k=\tilde{\mv{p}}_{{\rm ext}}$, $\forall k$. Then define $\bar{\rho}^\ast$ as
\begin{align}\label{eqn:a2}
\bar{\rho}^\ast=\min\limits_{\mv{W}}\min\limits_{\mv{V}\in \mathcal{V}}\max\limits_{1\leq k \leq K}\min\limits_{1\leq j_k \leq K+1} \frac{\mv{e}_{j_k}^T\mv{A}_k(\mv{W},\mv{V})\tilde{\mv{p}}_{{\rm ext}}}{\mv{e}_{j_k}^T\tilde{\mv{p}}_{{\rm ext}}}.
\end{align}It can be observed that $\bar{\rho}^\ast$ is a lower bound of $\rho^\ast$, i.e., $\bar{\rho}^\ast\leq \rho^\ast$.

According to the definition of $\mv{A}_k(\mv{W},\mv{V})$'s given in (\ref{eqn:matrix Ak}), we have
\begin{align}\label{eqn:a3}
\frac{\mv{e}_{j_k}^T\mv{A}_k(\mv{W},\mv{V})\tilde{\mv{p}}_{{\rm ext}}}{\mv{e}_{j_k}^T\tilde{\mv{p}}_{{\rm ext}}}=\left\{\begin{array}{ll}\frac{1}{\gamma_{j_k}(\tilde{\mv{p}},\mv{w}_{j_k})}, & {\rm if} ~ 1\leq j_k \leq K, \\ \frac{\tilde{p}_k}{\bar{P}_k(\mv{V},\bar{\tau})}\times \frac{1}{\gamma_k(\tilde{\mv{p}},\mv{w}_k)}, & {\rm if} ~ j_k=K+1, \end{array} \right.
\end{align}where $\gamma_k(\tilde{\mv{p}},\mv{w}_k)$ and $\bar{P}_k(\mv{V},\bar{\tau})$ are given in (\ref{equ:SINR in
SIMO-IC}) and (\ref{eq:available power}), respectively, $\forall k$. It is worth noting that $\tilde{\mv{W}}$ is the optimal MMSE receiver corresponding to the power allocation $\tilde{\mv{p}}$, as shown in \cite{Schubert04}, \cite{Zhanglan08}, which maximizes $\gamma_k(\tilde{\mv{p}},\mv{w}_k)$, $\forall k$. As a result, according to (\ref{eqn:a3}), given any $\mv{V}$ we have
\begin{align}\label{eqn:a4}
\frac{\mv{e}_{j_k}^T\mv{A}_k(\tilde{\mv{W}},\mv{V})\tilde{\mv{p}}_{{\rm ext}}}{\mv{e}_{j_k}^T\tilde{\mv{p}}_{{\rm ext}}}\leq \frac{\mv{e}_{j_k}^T\mv{A}_k(\mv{W},\mv{V})\tilde{\mv{p}}_{{\rm ext}}}{\mv{e}_{j_k}^T\tilde{\mv{p}}_{{\rm ext}}} ~~ {\rm if} ~ \mv{W}\neq \tilde{\mv{W}}, ~~ \forall k, ~~ \forall j_k.
\end{align}It then follows
\begin{align}\label{eqn:a5}
\min\limits_{\mv{W}}\max\limits_{1\leq k \leq K}\min\limits_{1\leq j_k \leq K+1}\frac{\mv{e}_{j_k}^T\mv{A}_k(\mv{W},\mv{V})\tilde{\mv{p}}_{{\rm ext}}}{\mv{e}_{j_k}^T\tilde{\mv{p}}_{{\rm ext}}}=\max\limits_{1\leq k \leq K}\min\limits_{1\leq j_k \leq K+1}\frac{\mv{e}_{j_k}^T\mv{A}_k(\tilde{\mv{W}},\mv{V})\tilde{\mv{p}}_{{\rm ext}}}{\mv{e}_{j_k}^T\tilde{\mv{p}}_{{\rm ext}}}, ~~ \forall \mv{V}.
\end{align}Since (\ref{eqn:a5}) holds for all $\mv{V}$, it follows that
\begin{align}\label{eqn:a6}
\bar{\rho}^\ast=\min\limits_{\mv{W}}\min\limits_{\mv{V}\in \mathcal{V}}\max\limits_{1\leq k \leq K}\min\limits_{1\leq j_k \leq K+1} \frac{\mv{e}_{j_k}^T\mv{A}_k(\mv{W},\mv{V})\tilde{\mv{p}}_{{\rm ext}}}{\mv{e}_{j_k}^T\tilde{\mv{p}}_{{\rm ext}}}=\min\limits_{\mv{V}\in \mathcal{V}}\max\limits_{1\leq k \leq K}\min\limits_{1\leq j_k \leq K+1}\frac{\mv{e}_{j_k}^T\mv{A}_k(\tilde{\mv{W}},\mv{V})\tilde{\mv{p}}_{{\rm ext}}}{\mv{e}_{j_k}^T\tilde{\mv{p}}_{{\rm ext}}}.
\end{align}Note that given $\mv{W}=\tilde{\mv{W}}$ and $\mv{V}=\tilde{\mv{V}}$, $\tilde{\mv{p}}_{{\rm ext}}$ is the dominant eigenvector of $\mv{A}_{\tilde{k}^\ast}(\tilde{\mv{W}},\tilde{\mv{V}})$, i.e.,
\begin{align}\label{eqn:a7}
\mv{A}_{\tilde{k}^\ast}(\tilde{\mv{W}},\tilde{\mv{V}})\tilde{\mv{p}}_{{\rm ext}}=\rho(\mv{A}_{\tilde{k}^\ast}(\tilde{\mv{W}},\tilde{\mv{V}}))\tilde{\mv{p}}_{{\rm ext}}.
\end{align}We thus have $\gamma_k(\tilde{\mv{p}},\tilde{\mv{w}}_k)=\frac{1}{\rho(\mv{A}_{\tilde{k}^\ast}(\tilde{\mv{W}},\tilde{\mv{V}}))}$, $\forall k$. As a result, with $\mv{W}=\tilde{\mv{W}}$, (\ref{eqn:a3}) can be further simplified as
\begin{align}\label{eqn:a8}
\frac{\mv{e}_{j_k}^T\mv{A}_k(\tilde{\mv{W}},\mv{V})\tilde{\mv{p}}_{{\rm ext}}}{\mv{e}_{j_k}^T\tilde{\mv{p}}_{{\rm ext}}}=\left\{\begin{array}{ll}\rho(\mv{A}_{k^\ast}(\tilde{\mv{W}},\tilde{\mv{V}})), & {\rm if} ~ 1\leq j_k \leq K, \\ \frac{\tilde{p}_k}{\bar{P}_k(\mv{V},\bar{\tau})}\times \rho(\mv{A}_{k^\ast}(\tilde{\mv{W}},\tilde{\mv{V}})), & {\rm if} ~ j_k=K+1. \end{array} \right.
\end{align}

Next, consider the special case of $\mv{V}=\tilde{\mv{V}}$. Since given $\tilde{\mv{W}}$ and $\tilde{\mv{V}}$, $\tilde{\mv{p}}$ is the optimal power solution to problem (\ref{equ:common throughput}), (\ref{eqn2}) and (\ref{eqn3}) must hold, i.e., $\tilde{p}_k= \bar{P}_k(\tilde{\mv{V}},\bar{\tau})$ if $k=\tilde{k}^\ast$, and $\tilde{p}_k\leq \bar{P}_k(\tilde{\mv{V}},\bar{\tau})$ otherwise. As a result, it follows that
\begin{align}\label{eqn:a9}
\min\limits_{1\leq j_k \leq K+1} \frac{\mv{e}_{j_k}^T\mv{A}_k(\tilde{\mv{W}},\tilde{\mv{V}})\tilde{\mv{p}}_{{\rm ext}}}{\mv{e}_{j_k}^T\tilde{\mv{p}}_{{\rm ext}}}=\left\{\begin{array}{ll}\rho(\mv{A}_{\tilde{k}^\ast}(\tilde{\mv{W}},\tilde{\mv{V}})), & {\rm if} ~ k=\tilde{k}^\ast, \\ \frac{\tilde{p}_k}{\bar{P}_k(\tilde{\mv{V}},\bar{\tau})}\times \rho(\mv{A}_{\tilde{k}^\ast}(\tilde{\mv{W}},\tilde{\mv{V}})), & {\rm if} ~ k\neq \tilde{k}^\ast. \end{array} \right.
\end{align}Thus, we have $\max\limits_{1\leq k \leq K}\min\limits_{1\leq j_k \leq K+1} \frac{\mv{e}_{j_k}^T\mv{A}_k(\tilde{\mv{W}},\tilde{\mv{V}})\tilde{\mv{p}}_{{\rm ext}}}{\mv{e}_{j_k}^T\tilde{\mv{p}}_{{\rm ext}}}=\rho(\mv{A}_{\tilde{k}^\ast}(\tilde{\mv{W}},\tilde{\mv{V}}))$ because $\frac{\tilde{p}_k}{\bar{P}_k(\tilde{\mv{V}},\bar{\tau})}\leq 1$ if $k\neq \tilde{k}^\ast$. According to (\ref{eqn:a6}), it thus follows that
\begin{align}\label{eqn:a10}
\bar{\rho}^\ast=\min\limits_{\mv{V}\in \mathcal{V}}\max\limits_{1\leq k \leq K}\min\limits_{1\leq j_k \leq K+1}\frac{\mv{e}_{j_k}^T\mv{A}_k(\tilde{\mv{W}},\mv{V})\tilde{\mv{p}}_{{\rm ext}}}{\mv{e}_{j_k}^T\tilde{\mv{p}}_{{\rm ext}}}\leq \max\limits_{1\leq k \leq K}\min\limits_{1\leq j_k \leq K+1}\frac{\mv{e}_{j_k}^T\mv{A}_k(\tilde{\mv{W}},\tilde{\mv{V}})\tilde{\mv{p}}_{{\rm ext}}}{\mv{e}_{j_k}^T\tilde{\mv{p}}_{{\rm ext}}}=\rho(\mv{A}_{\tilde{k}^\ast}(\tilde{\mv{W}},\tilde{\mv{V}})).
\end{align}

Next, we show $\bar{\rho}^\ast=\rho(\mv{A}_{k^\ast}(\tilde{\mv{W}},\tilde{\mv{V}}))$ by contradiction. Assume that $\bar{\rho}^\ast<\rho(\mv{A}_{\tilde{k}^\ast}(\tilde{\mv{W}},\tilde{\mv{V}}))$. In this case, there exists at least a $\mv{V}=\mv{V}'$ such that $\max\limits_{1\leq k \leq K}\min\limits_{1\leq j_k \leq K+1}\frac{\mv{e}_{j_k}^T\mv{A}_k(\tilde{\mv{W}},\mv{V}')\tilde{\mv{p}}_{{\rm ext}}}{\mv{e}_{j_k}^T\tilde{\mv{p}}_{{\rm ext}}}<\rho(\mv{A}_{\tilde{k}^\ast}(\tilde{\mv{W}},\tilde{\mv{V}}))$. According to (\ref{eqn:a8}), it follows that $\tilde{p}_k< \bar{P}_k(\mv{V}',\bar{\tau})$, $\forall k$. This indicates that
\begin{align}\label{eqn:a11}
\mv{A}_k(\tilde{\mv{W}},\mv{V}') \tilde{\mv{p}}_{{\rm ext}} \leq \rho(\mv{A}_{\tilde{k}^\ast}(\tilde{\mv{W}},\tilde{\mv{V}})) \tilde{\mv{p}}_{{\rm ext}} \ {\rm but} \ \neq \rho(\mv{A}_{\tilde{k}^\ast}(\tilde{\mv{W}},\tilde{\mv{V}})) \tilde{\mv{p}}_{{\rm ext}}, \ \forall k.
\end{align}According to Lemma \ref{lemma4}, it follows from (\ref{eqn:a11}) that $\rho(\mv{A}_k(\tilde{\mv{W}},\mv{V}'))<\rho(\mv{A}_{\tilde{k}^\ast}(\tilde{\mv{W}},\tilde{\mv{V}}))$, $\forall k$. In other words, we have  $\max\limits_{1\leq k \leq K} \rho(\mv{A}_k(\tilde{\mv{W}},\mv{V}')) <\rho(\mv{A}_{\tilde{k}^\ast}(\tilde{\mv{W}},\tilde{\mv{V}}))$, which contradicts to the fact that given $\tilde{\mv{W}}$, $\tilde{\mv{V}}$ is the optimal solution to problem (\ref{eqn:downlink}). Therefore, we have $\bar{\rho}^\ast=\rho(\mv{A}_{k^\ast}(\tilde{\mv{W}},\tilde{\mv{V}}))$.

Last, by combining (\ref{eqn:a1}) and (\ref{eqn:a2}), we have $\rho^\ast\geq \bar{\rho}^\ast=\rho(\mv{A}_{\tilde{k}^\ast}(\tilde{\mv{W}},\tilde{\mv{V}}))=\max\limits_{1\leq k \leq K} \rho(\mv{A}_k(\tilde{\mv{W}},\tilde{\mv{V}}))$. Moreover, $\rho^\ast=\min\limits_{\mv{W}}\min\limits_{\mv{V}\in \mathcal{V}}\max\limits_{1\leq k \leq K}\rho(\mv{A}_k(\mv{W},\mv{V}))\leq \max\limits_{1\leq k \leq K} \rho(\mv{A}_k(\tilde{\mv{W}},\tilde{\mv{V}}))$ also holds. To summarize, we have $\rho^\ast=\rho(\mv{A}_{\tilde{k}^\ast}(\tilde{\mv{W}},\tilde{\mv{V}}))$, i.e., the solution $(\tilde{\mv{W}},\tilde{\mv{V}})$ achieves the optimal value of problem (\ref{eqn:equivalent problem}). Theorem \ref{theorem2} is thus proved.

\end{appendix}

\end{document}